\documentclass[10pt]{article}
\makeatletter\def\input@path{{./}{arxiv/source/}}\makeatother
\usepackage{amsmath, amssymb, amsthm, amsfonts, latexsym}
\usepackage{mathtools}
\usepackage{enumerate}
\usepackage[sc]{mathpazo}
\usepackage[scaled=0.92]{helvet}
\usepackage[T1]{fontenc}
\usepackage{microtype}
\usepackage{graphicx}
\usepackage[letterpaper,margin=0.75in,columnsep=0.25in]{geometry}
\makeatletter
\renewcommand{\fnum@figure}{Figure~\thefigure}
\renewcommand{\fnum@table}{\textbf{Table \thetable}}
\long\def\@makecaption#1#2{\vskip\abovecaptionskip{\small\def\textbf##1{{\sffamily\bfseries ##1}}\textbf{#1}\enspace #2\par}\vskip\belowcaptionskip}
\makeatother
\usepackage[square,comma,numbers,sort&compress]{natbib}
\usepackage{url}
\allowdisplaybreaks[4]

\newtheorem{theorem}{Theorem}
\newcommand{\eps}{\varepsilon}

\def\scititle{Multi-Agent Learning of Reactive Strategies in Iterated Games}
\makeatletter
\renewcommand\section{\@startsection{section}{1}{\z@}{-3ex \@plus -1ex \@minus -.2ex}{1.2ex \@plus .2ex}{\sffamily\bfseries\Large}}
\renewcommand\subsection{\@startsection{subsection}{2}{\z@}{-2.5ex \@plus -1ex \@minus -.2ex}{1ex \@plus .2ex}{\sffamily\bfseries\large}}
\renewcommand\subsubsection{\@startsection{subsubsection}{3}{\z@}{-2ex \@plus -1ex \@minus -.2ex}{0.6ex \@plus .2ex}{\sffamily\bfseries\itshape\normalsize}}
\renewcommand\paragraph{\@startsection{paragraph}{4}{\z@}{1.5ex \@plus .5ex \@minus .2ex}{-1em}{\sffamily\bfseries\small}}
\renewcommand{\thesection}{\Roman{section}}
\renewcommand{\thesubsection}{\Alph{subsection}}
\renewcommand{\p@subsection}{\thesection\,}
\makeatother
\usepackage{xcolor}
\usepackage[colorlinks=true,allcolors={rgb:red,1;green,3;blue,8}]{hyperref}

\begin{document}
\twocolumn[{%
\begin{flushright}
{\sffamily\bfseries\huge \scititle\par}
\vspace{12pt}
{\sffamily\bfseries\small James P.~Wolfe$^{1}$, Joshua B.~Plotkin$^{2,\ast}$\par}
\vspace{3pt}
{\sffamily\scriptsize $^{1}$Graduate Group in Applied Mathematics and Computational Science, University of Pennsylvania, Philadelphia, PA 19104, USA, $^{2}$Department of Biology, University of Pennsylvania, Philadelphia, PA 19104, USA\par
$^\ast$Corresponding author. Email: jplotkin@sas.upenn.edu\par
\vspace{3pt}
October 9, 2026\par}
\end{flushright}
\vspace{22pt}
{\sffamily\small\noindent\textbf{ABSTRACT}\enspace Learning agents, whether humans or algorithms, adapt to one another rather than to a fixed environment. Selfish learning need not produce a good outcome for the group, especially when learners face a social dilemma. The iterated prisoner's dilemma (IPD) is the classical example, but it is usually studied in populations of agents who imitate successful strategies. Here, we instead study selfish \textit{gradient ascent learners} on the two-dimensional space of \textit{reactive strategies}. A reactive strategy is described by two continuous traits: reciprocity, the chance of cooperating after the co-player cooperates, and forgiveness, the chance of cooperating after the co-player defects. Each learner updates its strategy in the direction that most rapidly improves its payoff against all other learners, by an amount governed by its \textit{learning rate}. We show that heterogeneity in learning rates can qualitatively change the long-term outcome: a population with uniform learning rates converges to always-defect (ALLD), whereas a population with variable learning rates approaches generous-tit-for-tat (GTFT), which produces nearly optimal collective payoffs. In simulations, every population passes through a transient state of minimal forgiveness, in which the distribution of reciprocity determines the outcome. Variable learning rates promote cooperation because slow learners preserve high reciprocity until the population reaches this state. We prove sufficient conditions on the distribution of reciprocity for convergence to ALLD or to tit-for-tat (TFT). Variable learning rates also allow cooperation to recover from perturbation, which is provably impossible under uniform learning. These results suggest that purely selfish and myopic learning can produce persistent cooperation provided the population has intrinsic variability in learning rates.\par}
\vspace{24pt}
}]

\section{Introduction}

Cooperation is costly to the individual who provides it, and yet it underlies social life at every scale of biological organization. In a single encounter between self-interested players, the prisoner's dilemma admits only one Nash equilibrium, mutual defection, and cooperation cannot persist in a well-mixed population. Repeated interaction changes this. When the same players meet many times, each can condition its behavior on the history of play, so that the prospect of future reward or retaliation, the ``shadow of the future,'' makes reciprocity possible~\cite{trivers1971-EvolutionReciprocalAltruisma,axelrod1981-EvolutionCooperation}. Reciprocal strategies such as tit-for-tat and generous tit-for-tat reward cooperation and forgive or punish defection, and they can stabilize cooperation that would be unstable in a one-shot game~\cite{nowak1990-EvolutionStochasticStrategies,nowak1992-TitTatHeterogeneous,nowak2006-FiveRulesEvolution}. The iterated prisoner's dilemma (IPD) is the canonical framework for studying how reciprocity allows cooperation among self-interested individuals to arise and persist~\cite{axelrod1981-EvolutionCooperation,nowak2006-EvolutionaryDynamicsExploring}.

The same repeated structure that makes reciprocity possible also makes the IPD strategically rich. A strategy may condition each move on the entire history of play, so the space of possible strategies is enormous. The Folk Theorem of repeated game theory makes the size of the resulting equilibrium set precise: once the horizon is long enough, nearly any individually rational payoff can be supported at a Nash equilibrium, and the set of equilibria expands as the game lengthens~\cite{friedman1971-NoncooperativeEquilibriumSupergames,benoit1985-FinitelyRepeatedGames,fudenberg1986-FolkTheoremRepeated}. This multiplicity is what makes cooperation possible, but it also leaves the outcome undetermined, since equilibrium analysis alone cannot say which of the many equilibria a population will reach. A strategy space of this size is a challenge for any learning rule, and for this reason the IPD has become a natural testbed for multi-agent learning~\cite{sandholm1996-MultiagentReinforcementLearning,bloembergen2015-EvolutionaryDynamicsMultiAgent}. At the same time, the learning rules studied in this setting may capture something about how humans and other organisms adjust their behavior, not only how artificial agents might be designed.

Two research traditions address strategic change in this game. Evolutionary game theory models populations whose strategies spread by selection or imitation, whereas the theory of learning in games models individuals who adjust their own strategies in response to payoffs~\cite{fudenberg1998-TheoryLearningGames}. Here we study a population of selfish learners who update reactive IPD strategies by gradient ascent on their payoffs against the population as a whole. We show that a single, naturalistic form of individual variation, namely heterogeneity in learning rates, can determine whether such a population converges to unconditional defection or to stable and nearly socially optimal cooperation.

Evolutionary game theory provides a natural point of comparison for our study~\cite{smith1982-EvolutionTheoryGames}. In that framework, strategies change by payoff-biased imitation or differential reproduction rather than by gradient ascent~\cite{taylor1978-EvolutionaryStableStrategies,hofbauer1998-EvolutionaryGamesPopulation}, and the aim is not to design effective algorithms for multi-agent problems but to model how strategies spread and are inherited among humans or other organisms. The IPD, in particular, is used to model the evolution of altruistic behavior. Much of this research considers a discrete set of strategies, most notably evolutionary versions of Axelrod's tournament~\cite{axelrod1981-EvolutionCooperation}, but several studies treat continuous strategy spaces. Nowak \& Sigmund~\cite{nowak1992-TitTatHeterogeneous} studied the space of reactive memory-one strategies, in which a player's policy depends on the co-player's previous move. In their simulations of payoff-biased imitation, populations usually evolve toward unconditional defection; occasionally, the reciprocal strategy tit-for-tat permits the rise of the more forgiving strategy generous tit-for-tat, which resists invasion by defectors.
Imhof \& Nowak~\cite{imhof2010-StochasticEvolutionaryDynamics} extended this analysis by adding mutation. They found the same role for tit-for-tat, but with mutation the population does not remain at generous tit-for-tat: it cycles between cooperative and uncooperative phases indefinitely.

We instead consider a gradient-ascent learning process in a population of agents engaged in pairwise iterated games. Each individual plays the rest of the population and updates its strategy locally, in the direction that most increases its payoff. Gradient ascent assumes that players have enough cognition to compute their gradient, but that they are myopic, making only local changes to their strategies. Gradient ascent therefore sits between two extremes: classical economic analysis, which assumes perfect rationality and immediate convergence to a Nash equilibrium, and evolutionary game theory, which assumes payoff-biased imitation with no cognition at all. Neither extreme predicts an outcome here. The Folk Theorem~\cite{fudenberg1986-FolkTheoremRepeated} guarantees a continuum of Nash equilibria for infinitely repeated games but gives no guidance as to which one is chosen, and payoff-biased imitation can cycle indefinitely without settling on any stable equilibrium~\cite{imhof2010-StochasticEvolutionaryDynamics,nowak2006-EvolutionaryDynamicsExploring,imhof2005-EvolutionaryCyclesCooperation}.

Prior work on learning dynamics in repeated games falls into two broad categories: (i) theoretical analysis of convergence for RL or gradient dynamics, often in two-player, two-action games; and (ii) empirical studies evaluating a wider range of learning algorithms across broader classes of games. Empirical studies of this second kind are more relevant to us. Several empirical studies show that learning agents, including but not limited to those using gradient ascent, frequently fail to solve the social dilemma of the IPD, tending to reach a non-cooperative resolution~\cite{sandholm1996-MultiagentReinforcementLearning,macy2002-LearningDynamicsSocial,tuyls2006-EvolutionaryDynamicalAnalysis,bloembergen2015-EvolutionaryDynamicsMultiAgent}. These studies, however, often do not examine the dynamics of larger strategy spaces, considering only the one-dimensional ``probability of cooperating.'' In this one-dimensional case, theoretical studies show that under certain gradient ascent algorithms the strategies of both players converge on ``always defect,'' the only Nash equilibrium, or else their average payoffs converge on the payoffs at that equilibrium~\cite{singh2000-NashConvergenceGradient,bowling2002-MultiagentLearningUsing}. One study avoids the failure of gradient ascent in the IPD by allowing agents to include, with some weighting, not just their own payoff but also the average payoff of the group when calculating gradients~\cite{zhang2019-SAIGAMultiagentReinforcement}. In a larger strategy space, Foerster et al.~\cite{foerster2018-LearningOpponentLearningAwareness} demonstrated the difficulty of reaching cooperation without foresight. Studies of learning in the IPD also rarely consider the dynamics of \textit{many} agents updating their strategies simultaneously. McAvoy et al.~\cite{mcavoy2022-SelfishOptimizationCollective} show that while two agents updating their strategies against one another by gradient ascent typically fail to achieve socially optimal cooperation, a large population of learners well-mixed in their interactions, who encounter any one individual rarely, can succeed. But in that model the interactions, and updates, are random and occur one at a time.

Like Imhof \& Nowak, we study a large population on the space of reactive memory-one strategies, but our agents update by gradient ascent rather than imitation. As in the evolutionary game theory literature, our aim is not to design a more effective algorithm for multi-agent learning but to model how individuals adjust their strategies under a different set of assumptions. And in contrast to McAvoy et al.~\cite{mcavoy2022-SelfishOptimizationCollective}, where agents meet one at a time, we study learning that occurs \textit{simultaneously} across the population, so that each agent's update reflects the strategies of all others at once.

We are most interested in how individual variation in one parameter can affect population-level outcomes. There is reason to believe individual variation in microparameters might produce globally different outcomes, even though agents' updating rules are otherwise identical. Barfuss \& Meylahn~\cite{barfuss2023-IntrinsicFluctuationsReinforcement} show that individual stochastic fluctuations among agents undergoing reinforcement learning on the space of memory-one strategies in the IPD increased overall cooperation levels. Meng et al.~\cite{meng2024-DynamicsCollectiveCooperation} show that heterogeneous networks, when coupled with heterogeneous strategy updating rates, are more conducive to cooperation. We endow our agents with minimal knowledge: each agent knows only which direction of strategic change most rapidly improves its payoff. Beyond that direction, each agent has a single intrinsic parameter, its learning rate, that governs how strongly it acts on that information. We show that variation in the learning rate across the population, with no other source of noise or foresight, produces a population of seemingly stable cooperation where otherwise there would not be.

\section{Model}

\subsection{The iterated prisoner's dilemma with reactive strategies}

We consider a simplified iterated prisoner's dilemma (IPD) where, in each round, a pair of agents simultaneously choose whether to cooperate, paying a cost $c$ to confer a benefit $b > c$ to the other agent, or to defect, paying no cost and conferring no benefit~\cite{axelrod1981-EvolutionCooperation,sigmund2010-CalculusSelfishness,hilbe2018-PartnersRivalsDirect}. Strategies for the infinitely-repeated game can be arbitrarily complex, prescribing each round what action to take conditioned on the history of all prior rounds. We focus on a class of so-called ``reactive,'' memory-one strategies~\cite{nowak1990-EvolutionStochasticStrategies}, which are parameterized by two probabilities, $(p,q) \in [0,1]^2$, that specify the chance a player will cooperate each round in response to the action taken by the co-player in the preceding round. The parameter $p$ is the probability a player cooperates given their opponent cooperated in the previous round, i.e. their \textit{reciprocity}. The parameter $q$ is the probability a player cooperates given their opponent defected in the previous round, i.e. their \textit{forgiveness}. The space of reactive strategies includes many well-known examples such as always cooperate $(1,1)$, always defect $(0,0)$, tit-for-tat ($1,0$), and generous tit-for-tat ($1,1-\delta)$, which always reciprocates cooperation but forgives a defection with probability $1-\delta$~\cite{nowak1992-TitTatHeterogeneous}.

To ensure that the infinitely repeated game has a unique stationary distribution, and that long-term payoffs to both players do not therefore depend on the first round of play, we restrict ourselves to a compact subset of reactive strategies $(p,q)\in[\eps, 1-\eps]^2$, for small $0<\eps<1/2$. This strategy space can also be seen as reactive strategies in the context of a small rate of implementation error, $\eps$. The long-run average payoff per round to a strategy $\mathbf{x}_1 = (p_1, q_1)$ playing against $\mathbf{x}_2 = (p_2, q_2)$ is then given by
\[
\pi(\mathbf{x}_1, \mathbf{x}_2) = \frac{b(q_1 r_2 + q_2) - c(q_2 r_1 + q_1)}{1 - r_1 r_2},
\]
where $r_1 = p_1 - q_1$ and $r_2 = p_2 - q_2$ are the two players' \emph{degrees of reciprocity}, the amount by which each is more likely to cooperate after the co-player's cooperation than after a defection. The degree of reciprocity measures how strongly a player conditions its behavior on the co-player's last move, and it separates conditional from unconditional cooperation: tit-for-tat is maximally reciprocal, with $r = 1$, whereas always-cooperate and always-defect share $r = 0$, cooperating with the same probability whatever the co-player did.

\subsection{Strategy evolution by gradient ascent in a population of players}\label{sec:gradient-ascent}

We start by considering a well-mixed population of agents with fixed finite size, $N$. In successive discrete learning steps, agents play all pairwise games (including self-play) and receive their average payoff across those games, and then they each simultaneously update their strategy in the direction of their local gradient. Considering at first the simple case of just two agents playing $\mathbf{x}_1 = (p_1, q_1)$  against $\mathbf{x}_2 = (p_2, q_2)$, the gradient of agent 1's payoff with respect to its strategy, $\nabla_{\mathbf{x}_1}\pi = (\partial \pi/\partial p_1,\, \partial\pi/\partial q_1)$, has components
\begin{align}
\frac{\partial \pi}{\partial p_1} &= \frac{(b r_2 - c)(q_1 r_2 + q_2)}{(1 - r_1 r_2)^2}, \\
\frac{\partial \pi}{\partial q_1} &= \frac{(b r_2 - c)(1 - q_2 - p_1 r_2)}{(1 - r_1 r_2)^2}.
\end{align}
Both components of this gradient share the factor $b r_2 - c$, which has a simple interpretation. Agent 1's long-run payoff is $\pi = b x_2 - c x_1$, where $x_1$ and $x_2$ are the long-run probabilities that agents 1 and 2 cooperate in a round. Agent 2 cooperates with probability $q_2$ after a defection and with a probability larger by its degree of reciprocity, $r_2$, after a cooperation, so that $x_2 = q_2 + r_2 x_1$. Raising either of its cooperation probabilities, $p_1$ or $q_1$, raises agent 1's long-run rate of cooperation, $x_1$, and it changes agent 2's long-run rate of cooperation by $r_2$ times as much. Each unit of additional cooperation therefore costs agent 1 the amount $c$ and returns the benefit $b r_2$. Agent 1's payoff consequently increases in both of its strategy parameters when its co-player's degree of reciprocity exceeds $c/b$, and it decreases in both when the co-player's degree of reciprocity is below $c/b$. The threshold $c/b$ on an opponent's degree of reciprocity recurs throughout our analysis.

In a large population, however, each agent is assumed to play pairwise IPD games against all others in the population (including self-play) and update their strategy in the direction of maximum increase to their average payoff across all these pairwise games. The average payoff of strategy $\mathbf{x}_i$ in the population at time $t$ is
\[
\Pi^t(\mathbf{x}_i^t) = \frac{1}{N}\sum_{j=1}^N \pi(\mathbf{x}_i^t, \mathbf{x}_j^t),
\]
and so its gradient with respect to $x_i^t$ becomes
\[
\nabla_{\mathbf{x}_i^t} \, \Pi^t(\mathbf{x}_i^t) = \frac{1}{N} \sum_{j=1}^N \nabla_{\mathbf{x}_i^t} \, \pi(\mathbf{x}_i^t, \mathbf{x}_j^t).
\]
Note that we include self-play, though this does not meaningfully affect results.

In each discrete learning step, $t$, each individual updates their strategy according to a single step of gradient ascent:
\[\mathbf{x}_i^{t+1} = \operatorname{Proj}_{[\eps, 1-\eps]^2} \left(\mathbf{x}_i^t + \eta \nabla_{\mathbf{x}_i^t} \, \Pi^t(\mathbf{x}_i^t) \right), \]
where $\eta$ denotes the \emph{learning rate}, which scales the magnitude of an agent's strategic update. The operator $\operatorname{Proj}_{[\eps, 1-\eps]^2}$ ensures that the resulting strategy still lies within  $[\eps,1-\eps]^2$: this operator simply maps any strategy in $\mathbb{R}^2$ to the nearest point, by Euclidean norm, on the square $[\eps,1-\eps]^2$. In this case, this is equivalent to ``clipping'' each component to lie within $[\eps,1-\eps]$.

\begin{figure*}[t]
\centering
\includegraphics[width=\textwidth]{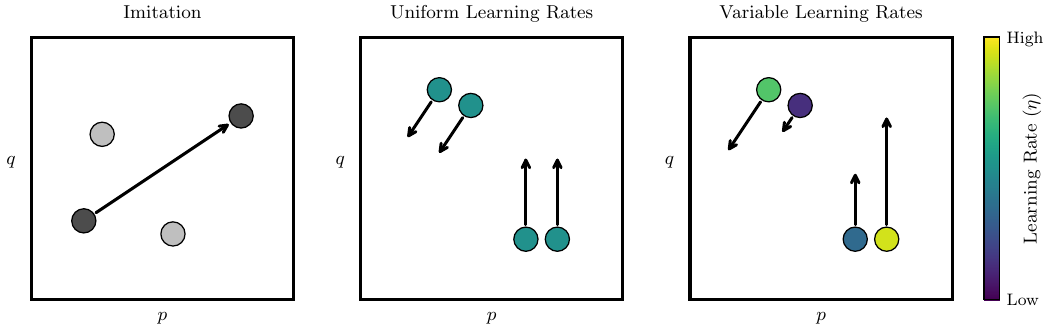}
\caption{\textbf{Agents starting from the same strategy move different distances when learning rates vary.} Comparison of different forms of strategy updating on the $(p,q)$ strategy space. Imitation involves one strategy copying another, typically in a manner such that more successful strategies are imitated more often. By contrast, agents who update via gradient ascent take into account other agents' strategies, but do not copy them; instead, they move in a direction which most rapidly improves their payoff. Having uniform learning rates means agents share a coefficient dictating how far along their gradients they move each step. Not all agents move the same distance, though agents at similar starting points should move similar distances. By contrast, with variable learning rates, each agent has their own coefficient which might differ from others, so agents starting from the same point with different learning rates move different distances.}
\label{fig:cartoon}
\end{figure*}

\subsection{Variation in rates of learning}

Now assume that each agent $i$ has its own intrinsic learning rate $\eta_i$ that does not change over time, but may differ among agents. Each agent $i$ at time $t$ is described by an ordered triple: $\mathbf{X}_i^t = (p_i^t, q_i^t, \eta_i)$, or $\mathbf{X}_i^t = (\mathbf{x}_i^t, \eta_i)$, where $\mathbf{x}_i^t = (p_i^t, q_i^t)$ and $\eta_i$ is unchanging in time. After playing all pairwise games and receiving average payoffs, agents simultaneously update their strategies by gradient ascent, using possibly different learning rates:
\[\mathbf{x}_i^{t+1} = \operatorname{Proj}_{[\eps, 1-\eps]^2} \left(\mathbf{x}_i^t + \eta_i \nabla_{\mathbf{x}_i^t} \, \Pi^t(\mathbf{x}_i^t) \right). \]
Two agents starting from the same strategy but differing in learning rate move different distances in a single update step (Fig.~\ref{fig:cartoon}, \emph{Variable Learning Rates}), whereas agents sharing a learning rate move similar distances (Fig.~\ref{fig:cartoon}, \emph{Uniform Learning Rates}), and agents who imitate copy one another's strategies outright (Fig.~\ref{fig:cartoon}, \emph{Imitation}), as in evolutionary game theory~\cite{smith1982-EvolutionTheoryGames,hofbauer1998-EvolutionaryGamesPopulation,nowak2006-EvolutionaryDynamicsExploring}.

We explore the discrete state and time process defined above, in which agents simultaneously perform gradient ascent updates, by direct simulation. The population is initialized with strategies drawn uniformly at random from $[\eps, 1-\eps]^2$. The deterministic process is simulated using either uniform learning rates, i.e. $\eta_i = \eta_j = \bar\eta$ for all $i, j$, or variable learning, where $\eta_i$ is drawn from $\mathrm{Uniform}[\bar\eta-a, \bar\eta + a].$ Unless otherwise noted, simulations use $c/b = 1/2$.

\subsection{Evolution of strategic density for infinite-population limit}

In the limit of large population size $N\to \infty$ we track the population's strategies not as individual agents, but as a generalized distribution (possibly a continuous density) $F^t(x)$ over the strategy space $[\eps, 1-\eps]^2$ at time $t$. Each strategy $x$ then moves along the average, over the whole population, of its payoff gradients against the strategies it plays. Even though the population's distribution may be continuous, we still consider discrete learning steps; taking a limit of small steps and rescaling time produces a continuous-time PDE, which we state in Sec.~S1.2 of the Supplemental Material. We also allow a distribution $G^t(\eta)$ of learning rates, independent of $x$ at $t = 0,$ and correlated with $x$ for $t>0$, since an agent's learning rate determines how far it moves at each step. This continuous process is the alternative ``Eulerian'' formulation of the individual-based ``Lagrangian'' formulation above.

To simulate this process, which admits continuous density functions, we approximate distributions with a fixed lattice over $[\eps, 1-\eps]^2$, each lattice point with an associated frequency (akin to a ``histogram approximation'' of a continuous distribution). Then, when mass moves between points it is redistributed via bilinear interpolation. These are standard methods for numerical solutions to PDEs, adapted to a discrete-time process. Details on the formulation and implementation of this continuous ``Eulerian'' model can be found in Sec.~S1 of the Supplemental Material.

\subsection{Initialization}

Global parameters $b = 1,c  = 1/2, \eps = 0.001, N = 2500$. Learning rates are either drawn $\eta_i \sim \mathrm{Uniform}[0.01, 0.39]$ (variable learning rates) or set $\eta_i = 0.2$ (uniform). The simulation for a variable learning rates population was terminated after 25,000 steps, and for a uniform learning rates population after 15,000 steps. Different $b/c$ ratios and population sizes $N$ are explored in Sec.~\ref{sec:robustness}.

\section{Results}

\begin{figure*}[t]
\centering
\includegraphics[width=\textwidth]{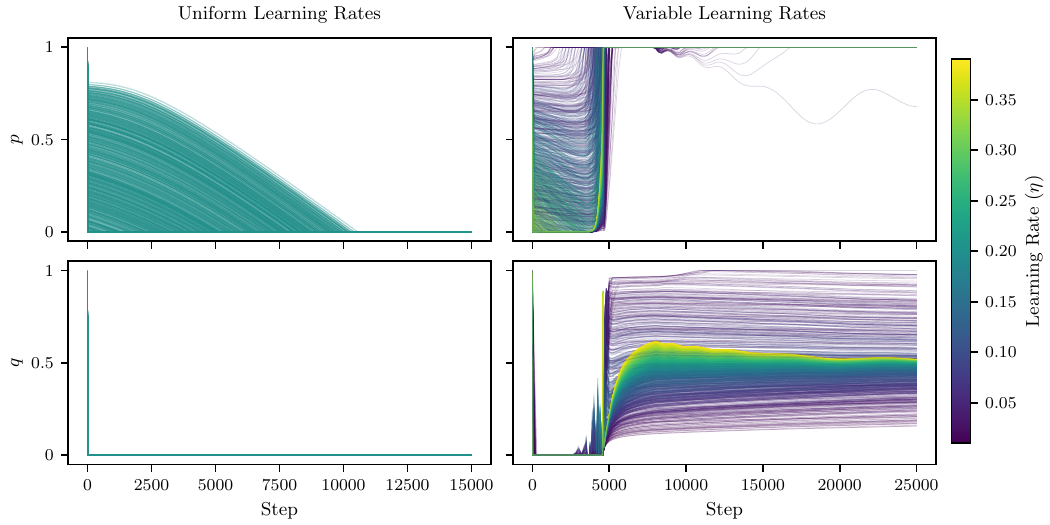}
\caption{\textbf{The population converges to generous tit-for-tat when learning rates vary, and to unconditional defection when they are uniform.} Trajectory plot of agents' reciprocity $p$ (top) and forgiveness $q$ (bottom) values over all of their learning steps, for agents with uniform learning rates (left) and variable learning rates (right).
}
\label{fig:distributions}
\end{figure*}

Individual differences in learning rates can qualitatively alter the course of strategic change in a population undergoing gradient ascent (Fig.~\ref{fig:distributions}). When agents all share the same learning rate ($\eta_i \equiv \eta$ for all $i$), the population converges to unconditional defection, eventually leaving every agent with payoff near zero. Convergence to defection is the outcome one would expect of selfish gradient learners in a social dilemma. But when agents vary in their learning rates, the population instead converges to a generous tit-for-tat strategy (GTFT), which sustains a high rate of cooperation and nearly socially optimal payoffs (Fig.~S1 of the Supplemental Material).

To understand how variation in learning rates produces these divergent outcomes, we consider a population of agents with strategies $\mathbf{x_i}=(p_i,q_i)$ initially drawn independently and uniformly from the space of reactive strategies $[\eps, 1-\eps]^2$. Individual learning rates $\eta_i$ are drawn from $\text{Uniform}[\bar\eta-a, \bar\eta + a]$, with $a = 0$ in the case of uniform learning, or $a>0$ for variable rates. In both cases, the population evolves along a characteristic, though ultimately divergent, trajectory:

 \begin{itemize}
    \item \textit{Collapse of forgiveness}: From the initial uniform distribution on the strategy space, all agents rapidly move towards lower forgiveness $q$ values.
    \item \textit{Transient unforgiving state}: All agents reach minimum forgiveness $q = \eps$ and changes are transiently restricted to updates to agents' reciprocity parameters, $p_i$.
    \item \textit{Move to GTFT or ALLD}: after a sufficient buildup of high reciprocity $p$, the population moves towards TFT $(1-\eps, \eps$) and then to GTFT $(1-\eps,1-c/b-\eps)$; or, after sufficient buildup of low reciprocity $p$, the population converges on ALLD.
    \item \textit{Stability of GTFT}: The GTFT strategy $(1-\eps,1-c/b-\eps)$ is the only stationary distribution on the  boundary $p = 1-\eps$.
    \end{itemize}
Snapshots of the characteristic trajectory in a population with variable learning rates show all four stages (Fig.~\ref{fig:agent_steps}). Provided the variance of learning rates is sufficiently large ($a$ large enough), such a population will evolve to eventual adoption of  $\text{GTFT}=(1-\eps,1-c/b-\eps)$.

We study this characteristic trajectory step by step, with a special focus on the \textit{transient unforgiving state} and \textit{stability of GTFT}. The long-term fate of the population is settled at these two stages, and they are also the most amenable to analysis, because their dynamics are mostly restricted to a single dimension. Although we primarily discuss the discrete finite-$N$ model, the qualitative results are replicated in a continuous version of the model in the limit $N\rightarrow\infty$ (Fig.~S2 of the Supplemental Material). This indicates that the emergence of GTFT under variable learning rates does not depend on discretization.

\subsection{Collapse of Forgiveness}\label{sec:collapse-forgiveness}

\begin{figure*}[t]
\centering
\includegraphics[width=\textwidth]{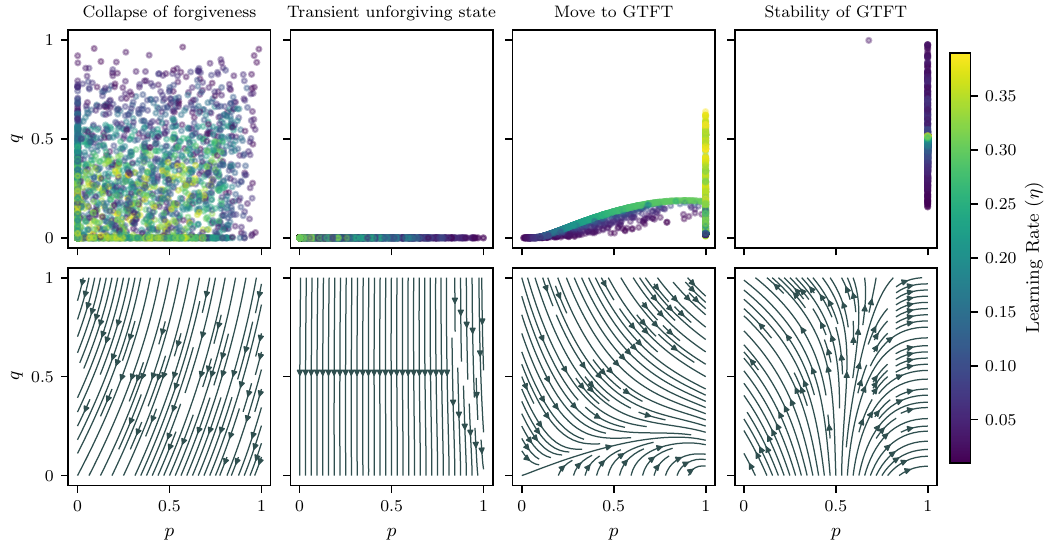}
\caption{\textbf{A population with variable learning rates passes through four stages before settling at generous tit-for-tat.} Agents' locations on the strategy space $[\eps, 1-\eps]^2$ at several learning steps (top, colored by learning rate $\eta$) and streamline plots of the vector field corresponding to the gradient against the population at these steps (bottom). \emph{Collapse of forgiveness}: early steps show stratification by learning rate; the slow learners with high reciprocity $p$ ultimately remain at high $p$ when forgiveness reaches its minimum, $q = \eps$. \emph{Transient unforgiving state}: the approximate beginning of the stage in which all agents $i$ have $q_i = \eps$ and some distribution of reciprocity $p_i$ not entirely at $\eps$. \emph{Move to GTFT}: agents move through the interior to reach a distribution of GTFT-like strategies. \emph{Stability of GTFT}: at the end of learning, the GTFT distribution is maintained.
}
\label{fig:agent_steps}
\end{figure*}

After initialization, faster learners move toward lower reciprocity $p$ and forgiveness $q$, reaching the unforgiving boundary $q=\eps$ while some slower learners retain high values of both coordinates (Fig.~\ref{fig:agent_steps}, \emph{Collapse of forgiveness}).
An explicit calculation against a uniform initial distribution shows that, at the parameters used here, both gradient components are negative throughout the strategy space (see Sec.~S2.1 of the Supplemental Material).
The initial decline in reciprocity $p$ therefore accompanies the decline in forgiveness $q$ even among strategies that later contribute to cooperation.

As the population changes, the gradients experienced by the remaining forgiving learners become nearly vertical in the simulations.
A population concentrated at ALLD provides an exact reference case: its gradient is $(-c\eps,-c(1-\eps))$ at every focal strategy, and a learner with rate $\eta$ loses less than $\eps(1+\eta c)$ in reciprocity before reaching $q=\eps$ (see Sec.~S2.2 of the Supplemental Material).
Although this bound assumes a fixed ALLD population, it explains how descent in a strongly defecting environment can preserve reciprocity.
Slower learners that begin with high $p$ can delay their descent until the gradient entails little further loss of $p$, leaving the population with support at high reciprocity upon entry to the transient unforgiving state (Fig.~\ref{fig:agent_steps}, \emph{Transient unforgiving state}).
With uniform learning rates, the simulated population instead enters this state with a lower maximum reciprocity (Fig.~S3 of the Supplemental Material, top row) and subsequently converges to ALLD (Fig.~\ref{fig:distributions}, left).

\subsection{Transient Unforgiving State}

The transient unforgiving state, in which all agents $i$ lie along the unforgiving line $q_i = \eps$, is reliably observed in both uniform and variable learning rate simulations.
Empirical distributions of $p$ values upon entering the transient unforgiving state are shown in Fig.~S3 of the Supplemental Material, top row. Although the variable learning rate population has more agents playing the ALLD strategy, it also has support on higher values of $p$, integral to ultimately stimulating cooperation.

The long-term fate of the population, either ALLD or GTFT, depends on the distribution of $p$ values (reciprocity) when the population first enters the transient unforgiving state. In fact, in simulations that enter the transient unforgiving state under variable learning rates, if we endow the agents with sufficiently low uniform learning rates from this point in time onward, the population still converges to GTFT. Conversely, starting from the transient unforgiving state in a simulation with uniform learning rates, and endowing agents with variable learning rates from that point onward, the population reliably ends up at ALLD. The distribution of $p$ values when reaching the transient unforgiving state is therefore the strongest determinant of the long-term outcome.

Because of these empirical results, we focus our analysis on the distribution of $p$ in a population when it first enters $q_i = \eps$ for all $i$.
We can determine which distributions over $p$ along the line $q_i= \eps$ converge to ALLD and which converge to TFT. When $q_i = \eps$ for all $i$, the gradient in the $p$ direction with respect to $(p_1, \eps)$ of its payoff against $(p_2, \eps)$ is given by
\[g_\eps(p_1, p_2) = \eps \frac{(1 + p_2 - \eps)\, \left( b\,(p_2 - \eps) - c\right)}{\left(1 - (p_1 -\eps)(p_2 - \eps)\right)^2}.
\]
We define the edge length $L = 1-2\eps,$ and let $p = p_1$ and $s = r_2 = p_2-\eps \in [0,L]$ denote the opponent's reciprocity measured from the defecting end of the edge. Then the $p$-component of the gradient of player 1 in a distribution defined by the probability measure $\mu$ is given by

\[G(p) = \eps \int_0^L \frac{(bs-c)(s+1)}{(1-(p-\eps)s)^2} \, d\mu_s(s) = \eps \, I_\mu(p).\]

The gradient in $p$ along the unforgiving line is smaller than the gradient in $q$ by a factor of order $\eps$, except near the boundary $p = 1-\eps$, where the two are similar.

The numerator of the integrand, $(bs-c)(s+1)$, is increasing in $s$, with a sign change at $\alpha = c/b$: opponents more reciprocal than $\alpha$ contribute a positive term to the gradient, and opponents less reciprocal than $\alpha$ a negative term. This sign change is the threshold described in Sec.~\ref{sec:gradient-ascent}: raising one's own cooperation pays only against an opponent whose degree of reciprocity exceeds $c/b$. Define the sign-separated tail integrals
\begin{align*}
P(\mu) &:= \int_\alpha^L (bs-c)(s+1)\, d\mu_s(s), \\
Q(\mu) &:= \int_0^\alpha (c-bs)(s+1)\, d\mu_s(s),
\end{align*}
which collect these two contributions with the denominator at its value for $p = \eps$, so that $\eps\, I_\mu(\eps) = \eps\,(P(\mu) - Q(\mu))$ is the gradient of an agent at ALLD, the least reciprocal strategy on the line. For the opposite fate we need every agent's gradient bounded above. The denominator factor $(1-(p-\eps)s)^{-2}$ is largest at $p = 1-\eps$, where it equals $(1-Ls)^{-2}$, so define
\begin{align*}
u(s) &:= \begin{cases} (bs-c)(s+1), & s < \alpha,\\[3pt] \dfrac{(bs-c)(s+1)}{(1-Ls)^2}, & s \ge \alpha, \end{cases} \\
U(\mu) &:= \int_0^L u(s)\, d\mu_s(s),
\end{align*}
the integrand of $I_\mu(p)$ with the denominator replaced, opponent by opponent, by the value that makes that opponent's contribution largest. The following theorem gives a sufficient condition for each fate, for arbitrary learning rates $\eta_i > 0$ and any step size.

\begin{theorem}
Let $\mu$ be the distribution of reciprocity $p$ when all agents have $q = \eps$. If $U(\mu) < 0$, the population converges to ALLD in finitely many learning steps. If $P(\mu) > Q(\mu)$ and the dynamics are restricted to $q = \eps$, the population converges to TFT in finitely many learning steps.
\label{thm:TUS}
\end{theorem}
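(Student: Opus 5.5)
Both parts of Theorem~\ref{thm:TUS} follow the same plan: show the sign of every agent's $p$-gradient is fixed at the start, show it stays fixed as the population moves, and then use a uniform lower bound on its magnitude to obtain finite time.

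\emph{ALLD part.} For any $p\in[\eps,1-\eps]$ and any opponent $s\in[0,L]$, the integrand $(bs-c)(s+1)(1-(p-\eps)s)^{-2}$ is bounded above by $u(s)$. For $s\ge\alpha$ the numerator is nonnegative and the denominator is smallest at $p=1-\eps$. For $s<\alpha$ the numerator is negative and the factor $(1-(p-\eps)s)^{-2}\ge 1$ only makes the term more negative. Hence $I_\mu(p)\le U(\mu)<0$ for every agent, and every $p_i$ moves down (or stays clipped at $\eps$).

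Next I would check that $U$ cannot increase along the dynamics. Since $u$ is nondecreasing in $s$ (the numerator is increasing and the factor $(1-Ls)^{-2}$ is increasing and positive on $s\ge\alpha$) and every opponent's $s$ decreases, we get $U(\mu^{t+1})\le U(\mu^t)\le U(\mu^0)<0$.

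I must also check that $q$ stays at $\eps$. The $q$-gradient is $(bs-c)(1-q_2-p_1 s)/(1-\cdot)^2$ with $q_2=\eps$. The factor $(1-\eps-p_1s)\ge 1-\eps-(1-\eps)L>0$ is nonnegative. So the $q$-gradient has the same sign structure as the $p$-gradient, with the same threshold at $\alpha$. I would bound it above in the same way (possibly needing the $s\ge\alpha$ tail to be dominated, via the analogous majorant). I expect this bound on the $q$-component to be the main obstacle. If it fails directly, I would note that the $q$-integrand equals the $p$-integrand times $(1-\eps-p_1s)/(\eps(1+s))$, rescale $u$ accordingly, and compare.

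Given a uniform negative bound, each unclipped agent decreases $p$ by at least $\eta_{\min}\eps|U(\mu^0)|$ per step. After at most $\lceil L/(\eta_{\min}\eps|U(\mu^0)|)\rceil$ steps every agent is clipped at $\eps$, which is ALLD. This is finite because $N$ is finite, so $\eta_{\min}>0$.

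\emph{TFT part.} With the dynamics restricted to $q=\eps$, the integrand $(bs-c)(s+1)(1-(p-\eps)s)^{-2}$ is bounded below by its value at $p=\eps$. For $s<\alpha$ the negative terms are largest in magnitude at $p=1-\eps$. To handle this, I would write $I_\mu(p)$ as a positive part minus a negative part. The positive part is at least $P(\mu)$, because for $s\ge\alpha$ the denominator factor is at least $1$. The negative part is at most $Q(\mu)\cdot(1-L\alpha)^{-2}$, which is weaker than $P>Q$.

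So the key step is a monotonicity argument instead. Consider the agent with the smallest $p$. Its gradient is at least $I_\mu(\eps)=P-Q>0$ up to the denominator factors. Once all agents move up, $P$ increases and $Q$ decreases, because mass shifts to larger $s$ and the integrand at $p=\eps$ is increasing in $s$. I would then argue that $I_\mu(p)$ is increasing in $p$ whenever $I_\mu(\eps)>0$. To do this, I would split the integral at $\alpha$ and use that $(1-(p-\eps)s)^{-2}$ grows faster in $p$ for larger $s$ (a ratio $h_p(s)=(1-(p-\eps)s)^{-2}$ increasing in $s$, giving a Chebyshev/single-crossing sign argument). That gives every agent a positive gradient bounded below by $\eps(P(\mu^0)-Q(\mu^0))$. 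Monotone invariance then yields finite-time arrival of all agents at $p=1-\eps$, namely TFT $(1-\eps,\eps)$. The single-crossing step showing the gradient is monotone in $p$ is the delicate point of this part.
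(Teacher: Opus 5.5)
Apart from one step, your proposal is the paper's proof. The shared steps are:
\begin{itemize}
\item the majorant $u$, which gives $I_\mu(p)\le U(\mu)$ for every $p$;
\item monotonicity of $u$ and of $\phi(s)=(bs-c)(s+1)$, which propagates each sign condition using only the direction in which agents move, so heterogeneous $\eta_i$ and arbitrary step sizes are harmless;
\item your single-crossing argument, which is exactly the paper's Chebyshev-type bound $I_\mu(p)-h(p,\alpha)\,I_\mu(\eps)=\int\phi(s)\,[h(p,s)-h(p,\alpha)]\,d\mu_s(s)\ge 0$, with $h(p,s)=(1-(p-\eps)s)^{-2}$ (you only need this comparison with $p=\eps$, not full monotonicity of $I_\mu$ in $p$);
\item the finite step count via $\min_i\eta_i$.
\end{itemize}

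The gap is the step you flag as the main obstacle and leave open. You must show that when $U(\mu)<0$, every agent's $q$-gradient is negative, so the population stays on $q=\eps$. This step is essential. Otherwise $\mu^{t+1}$ need not lie on the line, and $p_i-\eps$ is no longer agent $i$'s degree of reciprocity. Then neither $I_\mu\le U$ nor $U(\mu^{t+1})\le U(\mu^t)$ can be invoked at the next step.

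Matching the pointwise sign pattern of the two integrands does not settle the sign of the integral. The $q$-integrand reweights the negative ($s<\alpha$) and positive ($s>\alpha$) contributions relative to the $p$-integrand. The missing fact is that the ratio you wrote down, $\rho_i(s)=(1-\eps-p_is)/(\eps(1+s))$, is strictly decreasing in $s$: the numerator of its derivative is $-(1+p_i-\eps)<0$. Hence $\rho_i(s)-\rho_i(\alpha)$ has the sign of $\alpha-s$, opposite to the sign of $\phi(s)$. This gives, pointwise, $\eps\,\phi(s)h(p_i,s)\rho_i(s)\le\rho_i(\alpha)\,\eps\,\phi(s)h(p_i,s)$. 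Integrating yields $G_q(p_i)\le\rho_i(\alpha)\,G(p_i)$, with $\rho_i(\alpha)>0$.

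Since $G(p_i)=\eps I_\mu(p_i)\le-\eps\delta<0$, this forces $G_q(p_i)<0$. The projection then holds every agent at $q=\eps$, and your induction closes. Your alternative of a separate majorant for the $q$-integrand can also be made to work. However, deducing its negativity from $U(\mu)<0$ needs the same monotonicity of the ratio.
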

\noindent A proof of the theorem can be found in Sec.~S3 of the Supplemental Material. The intuition for convergence to TFT is that an agent at ALLD, the least reciprocal strategy on the line, has gradient exactly $\eps\,(P(\mu) - Q(\mu))$, and that whenever this gradient is positive, every other agent's gradient is positive as well, and at least as large. This follows from a Chebyshev-type inequality for similarly ordered functions~\cite{wagener2006-ChebyshevsAlgebraicInequality}: the numerator $(bs-c)(s+1)$ and denominator factor $(1-(p-\eps)s)^{-2}$ increase with the opponent's degree of reciprocity $s$. The denominator factor also increases faster for a more reciprocal focal agent, so a more reciprocal agent's gradient weights the positive contributions of highly reciprocal opponents more heavily. If $P(\mu) > Q(\mu)$, one learning step therefore moves every agent toward higher reciprocity; since $(bs-c)(s+1)$ is increasing in $s$, this movement can only increase $P - Q$, so the updated population satisfies the same inequality and the next step again moves every agent toward TFT. The condition for ALLD is preserved in the same way: $U(\mu)$ bounds every agent's gradient from above, and because $u$ is increasing, movement toward lower reciprocity can only decrease $U$.

However, only the ALLD half of the theorem describes the full two-dimensional dynamics. Under $U(\mu)<0$, the unforgiving line is invariant: a negative gradient in $p$ implies a negative gradient in $q$, so the one-dimensional analysis is in fact a full description of the two-dimensional dynamics. Under $P(\mu)>Q(\mu)$, though, the line is provably not invariant: to reach TFT, the agents will all eventually need $s > c/b$, beyond which every agent's gradient in $q$ will be positive, so that the population leaves the $q=\eps$ line. Thus, this restriction is an idealization.

Theorem \ref{thm:TUS} resolves the fate of a wide range of distributions on $q = \eps$ (Fig.~S4 of the Supplemental Material). Given the parameters of a truncated Beta distribution, for example, a simple calculation determines whether the population goes to ALLD or TFT, or whether the distribution satisfies neither criterion and its fate is left undecided. Neither initial distribution of agents in the transient unforgiving state, with uniform or variable learning rates, provably reaches ALLD or TFT (even under restricted dynamics) via Theorem \ref{thm:TUS}. However, by step 1900, the uniform learning rate population has $U < 0$, guaranteeing its eventual convergence to ALLD. The variable learning rates population evolves towards TFT and then departs the $q = \eps$ axis before having $P > Q$.

\subsection{Evolution to GTFT}

From the $q = \eps$ axis, the populations that eventually move to GTFT require a sufficient buildup of TFT-like strategies with high reciprocity $p$. However, unlike in our study of the transient unforgiving state, which restricted analysis to the $q =\eps$ axis, in the simulations, agents take an indirect path from this axis through the interior of the square to the $p = 1-\eps$ axis (many distributions on which we loosely call ``GTFT-like''). That the population leaves the $q = \eps$ axis once all agents have $p-\eps>\alpha$ is established in Sec.~S3 of the Supplemental Material. We do not have a formal analysis of the movement in two dimensions with many agents, and so we provide a visual summary of the population's movement instead. We anticipate that, in most situations, a ``move to TFT'' when restricted to the $q = \eps$ axis, as in our analysis of the transient unforgiving state, would coincide with a move to GTFT if the agents were not restricted.

Before agents at ALLD move in the direction of TFT/GTFT, the agents at TFT start to move toward GTFT: some agents already occupy $p = 1-\eps$ with $q > \eps$ before any agents have left ALLD.
The direction in which agents at ALLD move changes over time: as more agents reach more cooperative strategies, gradient flows start to point toward higher $q$ values on the $p = 1-\eps$ line (Fig.~S5 of the Supplemental Material).
Faster learners soon approach GTFT $p = 1-\eps, q = 1-c/b - \eps$, while slower learners remain outside on higher and lower $q$ values. The initial distribution of $q$ once all agents attain $p = 1-\eps$ at step 5800 has mean $0.41$, median $0.40$, with standard deviation $0.13.$

\subsection{Stability of GTFT}\label{sec:gtft-stability}

GTFT is the only stationary distribution restricted to the boundary $p=1-\eps$ (Theorem~\ref{thm:GTFT}), and it persists under variable learning rates but not under uniform learning rates.  GTFT is not locally stable against perturbations to the interior, however. A population displaced from GTFT in a direction that lowers its $p-q$ cannot return to GTFT under uniform learning rates (Theorem~S2 of the Supplemental Material). For variable rates, however, perturbation from GTFT produces a slow transient followed by damped oscillations and a return to GTFT.

We begin by showing there is only one stationary distribution on the boundary $p=1-\eps$, namely GTFT, or $(1-\eps, 1-c/b-\eps)$. This is formalized in the following theorem.
\begin{theorem}
    When $\eps>0$ and $c/b < 1-2\eps$, the only distribution $\rho(\mathbf{x})$ on the $p = 1-\eps$ axis such that $\nabla_q \Pi(\mathbf{x}) =\int \nabla_q \pi(\mathbf{x}, \mathbf{y}) \, d\rho(\mathbf{y}) = 0$ for all $\mathbf{x} = (1-\eps, q)$ in the support of $\rho$ is a Dirac mass with frequency 1 at $q = 1-c/b-\eps$.
    \label{thm:GTFT}
\end{theorem}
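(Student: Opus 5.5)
The plan is to compute the $q$-component of the gradient explicitly on the edge $p=1-\eps$. This shows that it has the same sign structure as the $p$-gradient on the unforgiving line. A Chebyshev-type comparison between the extreme points of the support of $\rho$ then rules out everything except a single point mass.

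\textbf{Step 1 (explicit gradient).} Take the focal strategy $\mathbf{x}=(1-\eps,q_1)$ and the opponent $\mathbf{y}=(1-\eps,q_2)$, so that $r_1=1-\eps-q_1$ and $r_2=1-\eps-q_2$, both in $[0,1-2\eps]$. In the formula for $\partial\pi/\partial q_1$, the second numerator factor simplifies to
\[
1-q_2-(1-\eps)r_2 = 1-(1-\eps)^2-\eps q_2 = \eps(2-\eps-q_2) > 0 .
\]
Hence
\[
\nabla_q\Pi(\mathbf{x}) = \eps\int h(q_2)\, w(q_1,q_2)\, d\rho(q_2),
\qquad
h(q_2)=\bigl(b(1-\eps-q_2)-c\bigr)(2-\eps-q_2),
\qquad
w(q_1,q_2)=(1-r_1r_2)^{-2}>0 .
\]
The factor $h$ is positive for $q_2<q^*:=1-\eps-c/b$ and negative for $q_2>q^*$, and it vanishes only at $q^*$. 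The hypothesis $c/b<1-2\eps$ guarantees $q^*\in(\eps,1-\eps)$, so the point mass at $q^*$ lies on the edge and is trivially stationary, since its integrand vanishes. What remains is uniqueness.

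\textbf{Step 2 (one-sided supports).} Let $q_m$ and $q_M$ be the minimum and maximum of the support of $\rho$; the support is compact, so both are attained.
\begin{itemize}
\item If $q_m\ge q^*$, evaluate the gradient at any support point. Every integrand term is $\le 0$, and the integral is zero only if $\rho$-almost every $q_2$ satisfies $h(q_2)=0$, i.e.\ $\rho=\delta_{q^*}$.
\item The case $q_M\le q^*$ is symmetric.
\end{itemize}
So we may assume $q_m<q^*<q_M$. Then $\rho$ puts positive mass on both sides of $q^*$: otherwise an extreme point of the support would be an isolated atom on one side, and the same sign argument would give a contradiction.

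\textbf{Step 3 (the key comparison, the main obstacle).} Stationarity at both extreme points means
\[
\int h\, w(q_m,\cdot)\,d\rho = 0 = \int h\, w(q_M,\cdot)\,d\rho .
\]
Write $w(q_M,q_2)=w(q_m,q_2)\,R(q_2)$, where, with $a=r(q_M)<A=r(q_m)$,
\[
R=\Bigl(\frac{1-Ar_2}{1-ar_2}\Bigr)^{-2}.
\]
The map $r_2\mapsto (1-ar_2)/(1-Ar_2)$ is increasing because $A>a$. So $R$ is increasing in $r_2$, hence strictly decreasing in $q_2$.

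This gives a Chebyshev-type inequality in the spirit of the one behind Theorem~\ref{thm:TUS}:
\begin{itemize}
\item for $q_2<q^*$ we have $h>0$ and $R(q_2)>R(q^*)$;
\item for $q_2>q^*$ we have $h<0$ and $R(q_2)<R(q^*)$.
\end{itemize}
In both cases $h R>h R(q^*)$. Therefore
\[
\int h\,w(q_M,\cdot)\,d\rho \;>\; R(q^*)\int h\,w(q_m,\cdot)\,d\rho = 0,
\]
with strict inequality because $\rho$ charges both sides of $q^*$. This contradicts stationarity at $q_M$.

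I expect the only delicate points to be verifying the strict monotonicity of $R$ and handling atoms and strictness carefully. The rest is bookkeeping. The argument yields that $\rho=\delta_{q^*}$, i.e.\ GTFT $=(1-\eps,1-c/b-\eps)$.
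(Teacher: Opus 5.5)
Your proof is correct, but it takes a different route from the paper's. The paper expands the kernel as $(1-r(q)r(y))^{-2}=\sum_{n\ge 0}(n+1)r(q)^n r(y)^n$ and integrates $H\psi$ against $\rho$ to get $0=\sum_n (n+1)m_n^2$, where $m_n=\int r^n\psi\,d\rho$. It then uses Weierstrass density to conclude that the signed measure $\psi\,d\rho$ vanishes.

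You instead compare only the two extreme points of the support, using a monotone-ratio (Chebyshev-type) inequality. This is the same device as Step 1 of the proof of Theorem~1, and it is the informal intuition the paper gives after its proof, made rigorous.

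The two routes buy different things:
\begin{itemize}
\item The paper's argument needs only the single identity $\int H\psi\,d\rho=0$ and the fact that the kernel is a power series in $r(q)r(y)$ with positive coefficients. It never singles out particular agents.
\item Yours avoids series and density theorems, and it proves more. For any $q<q'$ it gives $H(q')\le \kappa\,H(q)$ for a constant $\kappa>0$ depending on $q,q'$, with strict inequality unless $\rho=\delta_{q^*}$. So the $q$-gradient along $p=1-\eps$ changes sign at most once, from positive to negative. That is information about the dynamics, not just about stationarity.
\end{itemize}

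Your Step~2 case split is not needed. The pointwise inequality of Step~3 holds for every $q_2$, and strictness only requires $\rho(\{q_2\neq q^*\})>0$. Your justification of ``mass on both sides'' is also muddled, although the fact itself is immediate from the definition of support.

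There is one slip to fix. The correct ratio is $w(q_M,q_2)/w(q_m,q_2)=\bigl((1-Ar_2)/(1-ar_2)\bigr)^{2}$, the reciprocal of what you wrote. Since $A>a$, this ratio is decreasing in $r_2$, i.e. increasing in $q_2$. Step~3 therefore gives $H(q_M)<R(q^*)H(q_m)=0$ rather than $H(q_M)>0$; the contradiction is unaffected. Equivalently, your expression is $w(q_m,\cdot)/w(q_M,\cdot)$, and with it your chain shows that $H(q_M)=0$ forces $H(q_m)>0$: the less forgiving agent weights reciprocal opponents more heavily.
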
 \noindent A proof of this theorem can be found in Sec.~S4 of the Supplemental Material. The intuition is that each opponent contributes to an agent's gradient in $q$ with a sign set by the opponent alone: positive if the opponent's degree of reciprocity exceeds $c/b$, and negative otherwise (Sec.~\ref{sec:gradient-ascent}). A less forgiving agent weights the positive contributions more heavily than a more forgiving agent does, so two agents at different values of $q$ cannot both have zero gradient. The only possibility is a population concentrated at the single value of $q$ where the degree of reciprocity equals $c/b$, that is, at GTFT.

The Dirac mass at GTFT is a true fixed point of the full two-dimensional dynamics: for a population concentrated at this point, both components of the gradient vanish, including the gradient in the $p$-direction. Both components vanish because GTFT has degree of reciprocity exactly $c/b$, so that an opponent at GTFT makes a co-player's payoff independent of the co-player's strategy (Sec.~\ref{sec:gradient-ascent}).

The GTFT outcome is nearly socially optimal, as well. Consider another fixed point of the system, ALLD, where the uniform learning rate agents end up. Two ALLD agents playing one another will each have payoff near zero, $(b-c)\eps$, while two GTFT agents will each have payoff $b(1-\eps) - c$. Although two ALLC agents have slightly higher payoffs $(b-c)(1-\eps)$, the difference is $O(\eps)$. GTFT maximizes the payoff of the population subject to being invasion-proof against less cooperative strategies~\cite{nowak1990-EvolutionStochasticStrategies}. ALLC, on the other hand, is vulnerable to invasion or gradient ascent exploitation.

However, over the course of the agent-based simulations, the population remains in a slowly moving transient state along this line. All gradients in the $q$-direction are positive for agents on the $p = 1-\eps$ line when they first reach it. In the limit of small $\eps$, there is a threshold $q_{\text{GTFT}} = 1-c/b$ such that if the mean $q$ value of a population on $p = 1-\eps$ exceeds $q_{\text{GTFT}}$, then an agent at ALLC = $(1-\eps, 1-\eps)$ will have negative gradient in the $p$-direction. The mechanism is that an agent facing forgiving opponents raises its payoff by lowering its own reciprocity. When the opponents an agent plays have forgiveness $q$ above $1-c/b$, that agent can defect and, with high probability, not be punished, and if it is punished the opponent's high forgiveness restores cooperation quickly. Equivalently, opponents on the reciprocal boundary with forgiveness above $1-c/b-\eps$ have degree of reciprocity below $c/b$, the threshold identified in Sec.~\ref{sec:gradient-ascent}.

Indeed, in simulations, once the population approaches a mean $q$ value of $1-c/b$ (here $1/2$), multiple agents with high $q$ values then move toward lower $p$ (Fig.~\ref{fig:distributions}). In response, most of the population lowers their $q$ values, reducing the payoff from exploitation. Most agents who departed the $p = 1-\eps$ axis return, except for one. There is a back and forth as this one agent becomes more or less exploitative as the rest of the population moves in response. This back and forth can be seen in Fig.~\ref{fig:distributions} ``Variable Learning Rates.'' As mean $q$ approaches $1-c/b$, the minimum value of reciprocity $p$ becomes less than $1-\eps$, immediately followed by a slight decrease in the mean forgiveness $q$. The population eventually reaches a distribution where the majority of agents are near $q = 1-c/b-\eps,$ the stationary point. Given enough simulation time the population converges to GTFT. A long-term simulation of a reduced population sampled from the final state of our system suggests that the system is indeed slowly converging on GTFT (Fig.~S6 of the Supplemental Material).

Although GTFT is a stationary point, it is not locally stable. Perturbing even one agent from GTFT can destabilize the  population. When slightly increasing the $q$ value of a population at GTFT, a population with uniform learning rates provably never returns to GTFT (see Sec.~S6 of the Supplemental Material). In simulations this population reaches the $q=\eps$ axis (Fig.~S7 of the Supplemental Material), from which Theorem \ref{thm:TUS} guarantees convergence to ALLD, since its degree of reciprocity remains below $c/b$. However, a population with variable learning rates, despite being locally destabilized, eventually recovers to GTFT through a transient. Similarly to the main experiment, the preservation of high $p$ values by relatively slow learners stimulates the population's return to the $p = 1-\eps$ axis. This result is confirmed by perturbing random fractions of the population, with displacements having random magnitudes and directions (independent of other agents), over 500 simulations per learning rate regime: uniform learning rate populations never recover $p = 1-\eps$ for any agent, but variable learning rate populations always recover $p = 1-\eps$ for all agents (Fig.~S8 of the Supplemental Material).

\subsection{Robustness to population size and benefit-to-cost ratio}\label{sec:robustness}

The emergence of cooperation under variable learning rates is not particular to one population size or one payoff regime (Fig.~S9 of the Supplemental Material). We varied the benefit-to-cost ratio $b/c$ and the degree of heterogeneity in learning rates, and recorded the fraction of replicate populations that reached the cooperative GTFT outcome, for populations of size $N = 10$, $100$, and $1000$. Three patterns emerge. First, cooperation becomes more frequent as the benefit-to-cost ratio increases, consistent with the threshold role of $c/b$ in Theorem \ref{thm:TUS} and in the location of GTFT itself. Second, greater heterogeneity in learning rates shifts the boundary of cooperation toward smaller $b/c$: populations with more variable learners can sustain cooperation in games where less variable populations cannot. The learning rates are distributed according to a symmetric Beta distribution with parameter $\alpha$: $\eta_i \sim \mathrm{Beta}(\alpha, \alpha)$, shifted and scaled onto the range $[0.01, 0.39]$. As $-\log\alpha$ increases, the distribution of learning rates increases in variance, going from a point mass at $\eta = 0.2$ to a uniform distribution on $[0.01, 0.39]$ to a two-point distribution with equal probability mass at $0.01$ and $0.39$. Although some $b/c$ ratios cannot sustain cooperation with any parameter $\alpha$, some intermediate $b/c$ ratios can only sustain cooperation with sufficiently high learning rate variance. Third, the transition sharpens with population size. In small populations ($N = 10$) outcomes are mixed, with small-to-intermediate fractions of replicates reaching cooperation across much of the parameter range; in large populations ($N = 1000$) the outcome is nearly deterministic, with a sharp boundary in the $(b/c, \text{heterogeneity})$ plane separating populations that reach ALLD from populations that reach GTFT. This is consistent with our analysis: the fate of the population is determined by the distribution of reciprocity on entering the transient unforgiving state, and in larger populations that distribution concentrates around its expectation, so replicate populations behave alike.

\section{Discussion}

We have studied a population of selfish, myopic learners who update reactive IPD strategies by simultaneous gradient ascent on their payoffs against the whole population. The fate of such a population can hinge on a single naturalistic parameter: variation among individuals in their rates of learning. With uniform learning rates the population converges to unconditional defection, whereas with variable rates it converges to a stable distribution of generous tit-for-tat strategies and sustains nearly socially optimal payoffs. Along the way, every population passes through a transient unforgiving state in which forgiveness has collapsed and the distribution of reciprocity determines the long-term outcome. We proved criteria on this distribution that guarantee convergence to ALLD or to TFT (Theorem \ref{thm:TUS}), and we proved that GTFT is the only stationary distribution on the reciprocal boundary of the strategy space (Theorem \ref{thm:GTFT}). These results are robust to population size and benefit-to-cost ratio, with the transition between defection and cooperation sharpening as the population size grows (Fig.~S9 of the Supplemental Material).

Our results identify a condition under which gradient ascent solves a social dilemma, and so they speak directly to the literature on optimization and multi-agent learning. Gradient methods are effective across a wide range of optimization and learning problems~\cite{singh2000-NashConvergenceGradient,zinkevich2003-OnlineConvexProgramming}, and yet strict gradient ascent typically fails on social dilemmas such as the IPD, where pairs of learners making selfish updates converge to mutual defection~\cite{sandholm1996-MultiagentReinforcementLearning,macy2002-LearningDynamicsSocial,tuyls2006-EvolutionaryDynamicalAnalysis,bloembergen2015-EvolutionaryDynamicsMultiAgent}. Existing remedies modify the learner: agents may weight group payoffs in their gradients~\cite{zhang2019-SAIGAMultiagentReinforcement}, anticipate the learning of their opponents~\cite{foerster2018-LearningOpponentLearningAwareness}, or adapt their learning rate to their performance~\cite{bowling2002-MultiagentLearningUsing}. We found a way to make gradient ascent succeed that requires no engineering at all: place the learners in a population and let their learning rates vary. The agents remain purely selfish and myopic; only the composition of the population changes.

Heterogeneity in learning rates is not the only way to avoid the large basin of attraction to ALLD. Randomness in the learning process can have a similar effect. In our simulations, a population with uniform learning rates also converges to a GTFT-like distribution under gradient ascent perturbed by small isotropic noise (Fig.~S10 of the Supplemental Material), although an isolated pair of noisy learners still converges to mutual defection (Fig.~S11 of the Supplemental Material). Noise of this kind could enter through the strategy update directly, or through agents estimating their gradient from a finite sample of opponents rather than the whole population.
The value of variable learning rates is therefore not that they are the unique cause of cooperation, but that they are a more mechanistic and biologically realistic one. Variation in learning rates generates no new strategies and posits nothing beyond a fixed, intrinsic trait that differs across individuals, the kind of variation organisms are known to exhibit. Variation in learning rates therefore preserves the diversity already present in the population: slow learners with high reciprocity survive the collapse of forgiveness with their reciprocity intact, and they leave the population holding the high-reciprocity strategies that the later move to GTFT requires.

This contrast is sharpest against evolutionary game theory on a finite set of IPD strategies. When a population of ALLD, TFT, and ALLC strategies evolves by payoff-biased imitation with rare mutation, it never settles into cooperation. The population instead cycles indefinitely between cooperative and defecting phases: tit-for-tat permits the rise of generous players, then unconditional cooperators, who are then exploited by defectors, and the population collapses back toward ALLD before the cycle repeats~\cite{nowak1992-TitTatHeterogeneous,imhof2005-EvolutionaryCyclesCooperation,imhof2010-StochasticEvolutionaryDynamics,nowak2006-EvolutionaryDynamicsExploring}. Stable, persistent cooperation is not an outcome of this dynamic. Learning in a population with variation in learning rates succeeds where this imitation dynamic fails. The two differ in mechanism, in that our agents compute a local payoff gradient over a continuous strategy space rather than copying successful neighbors from a finite menu. The feedback we observe near GTFT, where an occasional agent moves toward exploitation and the population's forgiveness decreases in response, is a damped perturbation around this stable outcome, not a recurrence of the cooperative and defecting phases.

Closer to our setting, McAvoy et al.~\cite{mcavoy2022-SelfishOptimizationCollective} showed that selfish gradient learners can reach social optima when updates occur through rare, random pairwise encounters in a large population. Their mechanism is different from ours: optimality there arises from the ephemerality of interactions, whereas our agents all update simultaneously against the entire population, and optimality arises from heterogeneity in learning rates. The two models nevertheless agree on a broader point, namely that the population context can reverse the dysfunction of selfish learning in pairs.

Several limitations suggest directions for future work. Our agents compute gradients of their payoff against the entire population; a more realistic learner would sample a subset of opponents, introducing stochasticity whose effect on whether the population reaches ALLD or GTFT is unknown.
We restricted attention to reactive strategies; the full space of memory-one strategies doubles the dimension and includes strategies, such as extortionate ones, that may alter the transient unforgiving state. The continuous-time limit of our population dynamics is a nonlocal transport equation (Sec.~S1.2 of the Supplemental Material), and a rigorous treatment of its well-posedness and long-term behavior, including mass accumulation on the boundary, remains open. Finally, our agents' learning rates are fixed for life; learning rates that decay over time, as in stochastic approximation~\cite{singh2000-NashConvergenceGradient,bowling2002-MultiagentLearningUsing}, would likely stabilize the slow transient near GTFT into true convergence.

Our results also hint at broader implications for multi-agent learning.
The distribution of learning rates (or other microscopic parameters) in a population might determine which collective outcomes become accessible, and which are resilient.
Coordination problems, in particular, might benefit from this heterogeneity.
Mismatches are costly, so fast learners might abandon optimal strategies, but slower learners could maintain them long enough for others to relearn them.
Preserving strategic diversity through heterogeneity might lead to more optimal outcomes for multiple agents across a broad class of games and tasks.

Individuals differ in how quickly they learn, both across and within species, and that variation is well documented~\cite{thornton2012-IndividualVariationCognitive,sih2012-LinkingBehaviouralSyndromes,frank2007-GeneticTripleDissociation}. Variation in learning rates might also be thought of as a measure of one's reactivity to a situation: how quickly and how much should they act upon information at their disposal? Should they change their entire strategy, or wait for more to play out? It is similarly naturalistic that this trait might vary in a population. Our results show that this mundane form of variation is sufficient to produce stable and nearly optimal cooperation in a population of entirely selfish learners, with no genetic predisposition to altruism, no foresight, and no randomness in their updates. Variable learning rates cannot create high reciprocity where none exists; what they do is preserve the strategic diversity a population starts with, long enough for reciprocity to produce cooperation. Diversity in how individuals learn, not only in what they do, may thus be an underappreciated resource for the emergence of cooperation.

\section*{Data availability}
The code used to run the simulations, generate the figures, and perform the numerical checks in this article is openly available at \href{https://github.com/james-wolfe/reactive-learning}{https://github.com/james-wolfe/reactive-learning}.
The simulation data are archived on Zenodo~\cite{wolfe2026-Data}.

\section*{Acknowledgments}
We thank the members of the Plotkin lab for helpful discussions.

{\small
}

\onecolumn
\newgeometry{margin=1in}

\makeatletter
\setlength{\@fptop}{0pt}
\setlength{\@fpsep}{12pt plus 1fil}
\setlength{\@fpbot}{0pt plus 1fil}
\renewcommand{\p@subsection}{}
\makeatother
\renewcommand{\thefigure}{S\arabic{figure}}
\renewcommand{\thetable}{S\arabic{table}}
\renewcommand{\theequation}{S\arabic{equation}}
\renewcommand{\thesection}{S\arabic{section}}
\renewcommand{\thesubsection}{\thesection.\arabic{subsection}}
\renewcommand{\thetheorem}{S\arabic{theorem}}
\renewcommand{\thepage}{S\arabic{page}}
\renewcommand{\theHfigure}{S\arabic{figure}}
\renewcommand{\theHtable}{S\arabic{table}}
\renewcommand{\theHequation}{S\arabic{equation}}
\renewcommand{\theHsection}{S\arabic{section}}
\renewcommand{\theHtheorem}{S\arabic{theorem}}
\setcounter{figure}{0}
\setcounter{table}{0}
\setcounter{equation}{0}
\setcounter{section}{0}
\setcounter{theorem}{0}
\setcounter{page}{1}

\begin{center}
\section*{Supplemental Material for\\ \scititle}
James P.~Wolfe and Joshua B.~Plotkin$^\ast$\\
\small$^\ast$Corresponding author. Email: jplotkin@sas.upenn.edu\\
\end{center}

\section{Infinite population limit}\label{sec:infinite-population}

\subsection{Formulation}

Much like in the agent-based formulation, for any individual strategy $x \in [\eps, 1-\eps]^2,$ its expected payoff at time $t$ is given by the payoff of $x$ against all other strategies in $[\eps,1-\eps]^2$, weighted by their relative frequency:

\[
\Pi^t(x) = \int \pi(x, {y}) \, dF^t({y}).
\]
The gradient of strategy $x$ playing against the distribution $F^t(y)$ is then
\[
\nabla_{x} \, \Pi^t(x) = \int \nabla_{x} \pi(x, {y}) \, dF^t({y}).
\]
From this gradient we construct an update rule for the population. First, denote the strategy space $X = [\eps, 1-\eps]^2$ and let $E \subset \mathbb{R}^+ $ be the set of possible learning rates $\eta$ in the population. To track both strategies and learning rates at time $t$, we bundle them into the joint distribution $\mu^t$ over $\Omega = X\times E$. The function $\Phi^t$ returns the location after one gradient ascent step of the point $(x, \eta)$, with

\[\Phi^t(x, \eta) = (\mathrm{Proj}_X(x + \eta \, \nabla_{x} \, \Pi^t(x)), \eta).\]

The joint distribution update for the population is then given by the pushforward measure (a natural discrete-time analogue of a continuity equation)

\[\mu^{t+1}= (\Phi^t)_\# \mu^t,\]
i.e. for any Borel set $A \subset \Omega,$
\[\mu^{t+1}(A) = \iint \mathbf{1}_{\Phi^t(x,\eta)\in A} \, \mu^t(dx, d\eta).\]
This definition provides an exact but implicit description of the update rule. It says, in essence: take the measure of all points such that after gradient update, the point falls within $A$. The total measure of all these points is the new measure of $A$. Note that this definition allows for continuous probability densities, as well as the accumulation of point masses on the boundaries of $X$. In general, the update rule cannot be simplified to a closed-form density solution, excepting special cases.

\subsection{Continuous-time limit}

The discrete-time pushforward above has a formal continuous-time limit. Introduce a step size $h > 0$, replace the update by $\Phi^t_h(x,\eta) = (\mathrm{Proj}_X(x + h\,\eta\, \nabla_x \Pi^t(x)), \eta)$, and rescale time as $\tau = h t$. As $h \to 0$, the sequence of pushforward updates converges formally to the nonlocal continuity (transport) equation
\[\partial_\tau \mu_\tau + \nabla_x \cdot \big(\mu_\tau \, \eta\, v[\mu_\tau]\big) = 0, \qquad v[\mu](x) = \nabla_x \int_\Omega \pi(x, y)\, \mu(dy, d\eta'),\]
on $\Omega = X \times E$, with no-flux boundary conditions on $\partial X$ inherited from the projection, which permit the accumulation of mass on the boundary. We conjecture that the behavior established in the discrete-time model persists in this limit: depending on the distribution of reciprocity at the entry to the transient unforgiving state, solutions converge either to a Dirac mass at ALLD or to the Dirac mass at GTFT, which are the analogues of Theorems~1 and~2 of the main text. A rigorous treatment of existence and uniqueness for this equation, with boundary mass accumulation, is left for future work.

\subsection{Implementation}

Since the distribution of learning rates is unchanging in time, we can handle the mass associated with each value of $\eta$ separately. Consider $L$ separate $M\times N$-sized grids of equally spaced points on $[\eps, 1-\eps]^2$, each grid $l$ corresponding to a different learning rate $\eta_l$. Each point on grid $l$ has an associated frequency $f_l{(p_m,q_n)}$ such that $\sum_{m, n} f_l(p_m, q_n) = 1/L$. That way, each learning rate $\eta_l$ will always be associated with the same amount of mass.

The gradient of point $x = (p_i, q_j)$ need only have its notation adjusted: it is still given by the weighted average, across all grids, of the gradient of its payoff against the rest of the population,
\[\nabla_{x} \, \Pi^t(x) = \sum_{l=1}^L\sum_{n=1}^N \sum_{m=1}^M f_l(p_m, q_n)  \nabla_{x} \Pi(x, (p_m, q_n)).\]
The function $\Phi_k^t$ will now send a point $x$ one step along its gradient with learning rate $\eta_k$: \[\Phi_k^t(x)= \mathrm{Proj}_{[\eps, 1-\eps]^2}(x + \eta_k\, \nabla_{x} \, \Pi^t(x)).\]

After one gradient step, the mass associated with point $x$ could be anywhere in $[\eps, 1-\eps]^2$. To redistribute the mass back to the preset grid points, we implement a bilinear interpolation scheme, outlined below.

Let $(\tilde{p_i}, \tilde{q_j}) = \Phi_k^t(p_i, q_j)$ and \[a = \min\{\max\{r: p_r \leq \tilde{p_i}\}, M-1\}, \, b = \min\{\max\{s: q_s\leq \tilde{q_j}\}, N-1\}.\]
Then $p_a$ and $q_b$ are the maximum valued grid points such that $\tilde{p_i} \geq p_a$ and $\tilde{q_j} \geq q_b,$ except, by construction, $a$ and $b$ cannot be the highest possible index (which might happen if $(\tilde{p_i}, \tilde{q_j})$ was on the edge of the space). Now, consider the four points $(p_a, q_b), (p_{a+1}, q_b), (p_a, q_{b+1}), (p_{a+1}, q_{b+1})$. These form a square that either contains $(\tilde{p_i}, \tilde{q_j})$ in its interior or on its boundary. Let
\[d_p = \frac{\tilde{p_i}-p_a}{p_{a+1}-p_a}, \, d_q = \frac{\tilde{q_j}-q_b}{q_{b+1}-q_b}\]
denote the relative distances in each direction to the $p$ and $q$ coordinates of the square's corners.
If $w_{r,s}(\tilde{p_i},\tilde{q_j})$ is the fraction of the mass at $(p_m, q_n)$ that is allocated to the point $(p_r, q_s)$, then our mass distribution scheme is the following: \begin{align*}
w_{a,b}(\tilde{p_i},\tilde{q_j}) &= (1-d_p)(1-d_q), \\
w_{a+1,b}(\tilde{p_i},\tilde{q_j}) &= d_p(1-d_q), \\
w_{a,b+1}(\tilde{p_i},\tilde{q_j}) &= (1-d_p)d_q, \\
w_{a+1,b+1}(\tilde{p_i},\tilde{q_j}) &= d_p d_q,
\end{align*} and $w_{r,s} = 0$ for all other $r,s$. Note that this handles edge cases (like boundaries) well. (A geometric view of this interpolation scheme is that the fraction of mass that goes to one corner is equal to the area of the rectangle between the point in question and the opposite corner, divided by the area of each grid square.)

Finally, we write the updated frequency for the point $(p_m, q_n)$ on the $k$th grid, which is given by a similar pushforward:

\[f_k^{t+1}(p_m, q_n) = \sum_{i, j} f_k^t(p_i, q_j)\,  w_{m, n} (\Phi_k^t(p_i, q_j)).\]

\section{Gradients during the collapse of forgiveness}
\label{sec:collapse-gradients}

Variation in learning rates can preserve reciprocity by delaying slow learners' descent toward the unforgiving line.
Slow learners move little while the initial gradient points toward lower reciprocity and forgiveness, and can retain high reciprocity until their gradients become nearly vertical.
Two reference populations clarify this timing mechanism: the uniform initial distribution and a population concentrated at ALLD.

\subsection{The uniform initial distribution}
\label{sec:unif-gradient-collapse}

Write $X=[\eps,1-\eps]^2$ and $L=1-2\eps$.
Under the uniform distribution on $X$, the opponent's degree of reciprocity $r=p_2-q_2$ has density $(L-|r|)/L^2$, and its mean forgiveness conditional on $r$ is $(1-r)/2$.
Averaging the pairwise gradient from Sec.~II\,B of the main text therefore gives
\[
\begin{aligned}
G_p(p,q)
&=\frac{1}{L^2}\int_{-L}^{L}
\frac{(br-c)[\frac12+r(q-\frac12)]}
     {(1-(p-q)r)^2}(L-|r|)\,dr,\\
G_q(p,q)
&=\frac{1}{L^2}\int_{-L}^{L}
\frac{(br-c)[\frac12+r(\frac12-p)]}
     {(1-(p-q)r)^2}(L-|r|)\,dr.
\end{aligned}
\]
These exact integrals admit closed forms.
Along each segment with $p-q$ fixed, both components are linear in $(p,q)$, so their maxima occur at the segment's endpoints.
Their maxima over $X$ can therefore be found by numerical optimization along its edges.

At $b=1$, $c=1/2$, and $\eps=10^{-3}$, the largest component is approximately $-0.157536$, attained at $\mathrm{TFT}=(1-\eps,\eps)$.
Both components are therefore negative throughout the square.
Numerical optimization gives directions between $26.67^\circ$ and $63.33^\circ$ to the left of the downward vertical: initial learning entails substantial decreases in both reciprocity and forgiveness.
These statements describe the unprojected gradient against the exactly uniform distribution; projection limits decreases at the boundary, and a finite sample has its own empirical gradient.

\subsection{A population concentrated at ALLD}
\label{sec:alld-gradient-collapse}

Against a population $F_{\mathrm D}$ concentrated at $\mathrm{ALLD}=(\eps,\eps)$, substituting $r_2=0$ and $q_2=\eps$ into the pairwise gradient gives
\begin{equation}
G_{F_{\mathrm D}}(\mathbf{x})
=\bigl(-c\eps,-c(1-\eps)\bigr)
\qquad\text{for every }\mathbf{x}\in X.
\label{eq:alld-field}
\end{equation}
This constant field points only $\arctan(\eps/(1-\eps))\approx0.057^\circ$ left of the downward vertical at $\eps=10^{-3}$.
Before projection, forgiveness therefore decreases $999$ times as rapidly as reciprocity, independently of the learner's strategy or learning rate.
A learner with learning rate $\eta$ and initial forgiveness $q_0$ reaches $q=\eps$ after $k=\lceil (q_0-\eps)/(\eta c(1-\eps))\rceil$ learning steps, during which its reciprocity decreases by at most $k\,\eta c\eps < \eps(q_0-\eps)/(1-\eps)+\eta c\eps < \eps(1+\eta c)$, the bound quoted in Sec.~III\,A of the main text.

These calculations support the timing mechanism observed in Sec.~III\,A of the main text.
Slow learners retain high $p$ and $q$ during the initial diagonal descent, while faster learners accumulate near the unforgiving boundary.
Their later descent occurs under nearly vertical gradients, allowing them to reach low forgiveness with high reciprocity still intact (Fig.~3 of the main text, \emph{Collapse of forgiveness}).

\section{Proof of Theorem~1 of the main text: convergence in the transient unforgiving state}

On the edge $q_i = \eps$ the population is described by the distribution of reciprocity values $p \in [\eps, 1-\eps]$, and each agent moves by the mean gradient $G(p) = \eps\, I_\mu(p)$. Write $s = p - \eps \in [0, L]$ with $L = 1-2\eps$, let $\mu_s$ denote the induced law of $s$, and write $\phi(s) = (bs-c)(s+1)$ for the numerator of the integrand and $h(p,s) = (1-(p-\eps)s)^{-2}$ for the denominator factor, so that
\[
I_\mu(p) = \int_0^L \phi(s)\, h(p,s)\, d\mu_s(s), \qquad I_\mu(\eps) = P(\mu) - Q(\mu) = \int_0^L \phi\, d\mu_s .
\]
Agent $i$, with learning rate $\eta_i > 0$, updates by $p_i \mapsto \operatorname{Proj}_{[\eps,1-\eps]}\big(p_i + \eta_i\, G(p_i)\big)$. The argument has an invariance lemma (Step 0) plus three steps for each of the two fates, ALLD and TFT: a bound on $I_\mu(p)$ that holds for every $p$ at once, from below for TFT and from above for ALLD (Step 1); the observation that the bound is preserved when every agent moves in the same direction (Step 2); and the conversion of a uniform sign of the gradient into convergence in finitely many steps (Step 3). No assumption on the step sizes is needed, and the learning rates may differ across agents, because Step 2 uses only the direction in which each agent moves, not the preservation of the agents' order.

\paragraph{Step 0: the $q$-direction on the unforgiving line.}
On the line $q=\eps$, the gradient of agent $i$'s payoff in the $q$ direction is
\[
  G_q(p_i) \;=\; \int_0^L (bs-c)\,\big(1-\eps-p_i s\big)\, h(p_i,s)\, d\mu_s(s),
\]
whose integrand differs from that of $\eps I_\mu(p_i)$ only in the positive factor $1-\eps-p_i s \ge \eps$.
Writing $\rho_i(s) := (1-\eps-p_i s)\big/\big(\eps(1+s)\big)$, which is positive and strictly decreasing on $[0,L]$ since $\tfrac{d}{ds}\rho_i \propto -\eps(1+p_i-\eps) < 0$, we have $G_q(p_i) = \int \eps\,\phi(s)\,h(p_i,s)\,\rho_i(s)\,d\mu_s(s)$.
Because $\phi$ and $\rho_i - \rho_i(\alpha)$ are oppositely ordered, the integrand is bounded pointwise by $\eps\,\phi(s)h(p_i,s)\rho_i(\alpha)$, and integrating,
\begin{equation}
  G_q(p_i) \;\le\; \rho_i(\alpha)\, G(p_i), \qquad \rho_i(\alpha) > 0 .
  \label{eq:qbound}
\end{equation}
Hence $G(p_i) < 0$ implies $G_q(p_i) < 0$: an agent whose reciprocity is falling is also lowering its forgiveness, and the projection holds it at $q = \eps$.
Conversely, if $s > \alpha$ throughout the support of $\mu_s$, every factor in the integrand of $G_q$ is positive and $G_q(p_i) > 0$ for every $i$.

\paragraph{Step 1, TFT: a lower bound on every agent's gradient.} For every $p$,
\begin{equation}
I_\mu(p) \;\ge\; h(p,\alpha)\,\big(P(\mu) - Q(\mu)\big), \qquad h(p,\alpha) = \frac{1}{(1-(p-\eps)\alpha)^2} \;\ge\; 1 .
\label{eq:cheb}
\end{equation}
To see this, write $I_\mu(p) - h(p,\alpha)\, I_\mu(\eps) = \int_0^L \phi(s)\,\big[h(p,s) - h(p,\alpha)\big]\, d\mu_s(s)$. For fixed $p$ the factor $h(p,s)$ is nondecreasing in $s$, so $h(p,s) - h(p,\alpha)$ has the sign of $s - \alpha$; and $\phi(s)$ has the sign of $s - \alpha$ as well. The integrand is therefore nonnegative, which gives \eqref{eq:cheb}. (This is a Chebyshev-type inequality for similarly ordered functions~\cite{wagener2006-ChebyshevsAlgebraicInequality}.) In particular, if $P(\mu) > Q(\mu)$ then $I_\mu(p) \ge P(\mu) - Q(\mu) > 0$ for every $p$: whenever the gradient at ALLD, $p = \eps$, is positive, so is every agent's.

\paragraph{Step 1, ALLD: an upper bound on every agent's gradient.} For the ALLD fate we need the opposite bound, an upper bound on $I_\mu(p)$ that holds for every $p$. For fixed $s$, $h(p,s)$ increases from $1$ at $p = \eps$ to $(1-Ls)^{-2}$ at $p = 1-\eps$. Where $\phi(s) < 0$, that is for $s < \alpha$, the contribution $\phi(s)\,h(p,s)$ is largest at $p = \eps$; where $\phi(s) \ge 0$ it is largest at $p = 1-\eps$. Hence $\phi(s)\,h(p,s) \le u(s)$ for every $p$ and $s$, with $u$ as defined in the main text, and integrating against $\mu_s$,
\begin{equation}
I_\mu(p) \;\le\; U(\mu) \qquad \text{for every } p .
\label{eq:upper}
\end{equation}
Moreover $u$ is continuous and nondecreasing on $[0,L]$: on $[0,\alpha)$, $u'(s) = 2bs + (b-c) > 0$; on $[\alpha,L]$, $u$ is the product of the nonnegative increasing function $\phi$ and the positive increasing function $(1-Ls)^{-2}$; and both pieces vanish at $s = \alpha$.

\paragraph{Step 2: each bound is preserved by one learning step.} For TFT, suppose $P(\mu_0) > Q(\mu_0)$ and set $\delta = P(\mu_0) - Q(\mu_0)$. By \eqref{eq:cheb} every agent has $G(p_i) \ge \eps\delta > 0$, so every agent moves weakly toward higher reciprocity, $s_i' \ge s_i$, whatever its learning rate. Here the restriction to $q = \eps$ is used: without it, $s_i = p_i - q_i$ need not increase even though $p_i$ does. Since $\phi$ is increasing on $[0,L]$, with $\phi'(s) = 2bs + (b-c) > 0$,
\[
P(\mu_1) - Q(\mu_1) = \frac{1}{N}\sum_i \phi(s_i') \;\ge\; \frac{1}{N}\sum_i \phi(s_i) = P(\mu_0) - Q(\mu_0) = \delta,
\]
and in the Eulerian model the same inequality holds with the sums replaced by integrals against $\mu_0$. By induction $P(\mu_t) - Q(\mu_t) \ge \delta$ for every $t$, hence $I_{\mu_t}(p) \ge \delta$ for every agent at every step.

Suppose instead $U(\mu_0) = -\delta < 0$. By \eqref{eq:upper} every agent has $G(p_i) \le -\eps\delta < 0$, and then by \eqref{eq:qbound}, every agent also has $G_q<0$, so the population remains on the line and $\mu_1$ is again a distribution on $[0, L]$. Thus, the population moves weakly toward lower reciprocity, $s_i' \le s_i$. Since $u$ is nondecreasing, $U(\mu_1) = \frac{1}{N}\sum_i u(s_i') \le \frac{1}{N}\sum_i u(s_i) = U(\mu_0)$, and by induction $I_{\mu_t}(p) \le -\delta$ for every agent at every step.

\paragraph{Step 3: convergence in finitely many steps.} In the TFT case each agent gains at least $\eta_i\,\eps\,\delta$ in reciprocity per step until it reaches $p = 1-\eps$, where the projection holds it, so every agent is at $1-\eps$ after at most $\lceil L / (\min_i \eta_i\, \eps\, \delta) \rceil$ steps, the population at TFT $(1-\eps, \eps)$. In the ALLD case each agent loses at least $\eta_i\,\eps\,\delta$ per step until it reaches $p = \eps$, and the population is at ALLD $(\eps,\eps)$ after the same number of steps.

Finally, the TFT condition is exact for an argument that certifies a positive gradient at every point of the line: if $P(\mu) \le Q(\mu)$, then $I_\mu(\eps) = P(\mu) - Q(\mu) \le 0$, so an agent at ALLD, $p = \eps$, does not have a positive gradient. The ALLD condition is a sufficient condition only: $U(\mu)$ is an upper bound that in general no single agent attains, because the denominator factor is evaluated at $p = \eps$ for opponents less reciprocal than $\alpha$ and at $p = 1-\eps$ for opponents more reciprocal than $\alpha$. \hfill$\square$

Once every agent's $s$ exceeds $\alpha$, every agent's gradient in $q$ is positive, meaning the population will leave the $q = \eps$ line. This can be seen from the equation for $G_q(p_i)$ in Step 0. When $P(\mu) > Q(\mu)$, every agent moves monotonically to $L>\alpha$, implying that at some point before (with sufficiently small learning rate) reaching TFT, the entire population will have positive gradient in $q$. The restriction that $q =\eps$ is thus necessary; the population would assuredly leave the line without it.

\section{Proof that GTFT is the only stationary distribution on the reciprocal line}

\subsection*{The gradient along the reciprocal line}

Fix $p = 1-\eps$ and   consider a focal agent $(p,q)$ playing an opponent $(p,y)$, both on the
reciprocal boundary. Their payoff is
\[\pi(q, y) = \frac{b(q(p-y)+y) - c(q + (p-q)y)}{1-(p-q)(p-y)},\]
and differentiating with respect to the focal agent's forgiveness $q$ gives
\begin{equation}
\partial_q \pi(q,y) = \frac{\psi(y)}{\big(1-(p-q)(p-y)\big)^{2}},
\qquad
\psi(y) = (1-p)(1+p-y)\big(b(p-y)-c\big).
\label{eq:qgrad}
\end{equation}
The numerator $\psi(y)$ depends only on the opponent's forgiveness $y$, and the focal agent's own forgiveness $q$ enters only through the denominator, which is strictly positive.
The sign of the pairwise gradient is therefore set entirely by the opponent. Against the population $\rho$ the mean gradient of an agent at $q$ is
\[H(q) \;=\; \int \psi(y)\, \varphi(q,y) \, d\rho(y),
\qquad \varphi(q,y) = \frac{1}{\big(1-(p-q)(p-y)\big)^{2}}.\]

\subsection*{Only the Dirac mass at GTFT is stationary}

\begin{theorem}[Theorem~2 of the main text, restated]
Let $\eps>0$, $b>c>0$, and $c/b < 1-2\eps$.
Let $\rho$ be any Borel probability measure on the line $p = 1-\eps$, atomic or continuous, such that $H(q) = 0$ for every $q$ in the support of $\rho$.
Then $\rho$ is the Dirac mass at $q^{*} = 1-c/b-\eps$, i.e. at GTFT.
\end{theorem}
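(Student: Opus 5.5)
The plan is to exploit the structure already visible in \eqref{eq:qgrad}: the sign of each pairwise contribution is set by the opponent alone, and the focal agent's forgiveness $q$ enters only through the positive weight $\varphi(q,y)$. Write $a = p-y \in [0,L]$ for the opponent's degree of reciprocity and $\alpha = c/b$.

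First I will record two facts about $\psi$. Since $1-p = \eps > 0$ and $1+p-y>0$, $\psi(y)$ has the sign of $b(p-y)-c$. It is therefore positive exactly when $y < q^* = p - c/b = 1-c/b-\eps$, zero at $y=q^*$, and negative when $y>q^*$. The hypothesis $c/b<1-2\eps$ guarantees $q^*>\eps$, so $q^*$ is a legitimate point of the line.

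The core step is a Chebyshev-type comparison, in the same spirit as Step~1 of the proof of Theorem~1 above, but with the ordering reversed. Fix $q_1<q_2$ on the line and set $A = p-q_1 > B = p-q_2$. Define the ratio
\[
R(y) = \frac{\varphi(q_2,y)}{\varphi(q_1,y)} = \left(\frac{1-Aa}{1-Ba}\right)^{2}, \qquad a = p-y .
\]
The derivative of $\log\big((1-Aa)/(1-Ba)\big)$ in $a$ is $-A/(1-Aa)+B/(1-Ba)<0$. Hence $R$ is strictly decreasing in $a$, i.e. strictly increasing in $y$. Let $R^* = R(q^*)$. Then
\[
H(q_2) - R^* H(q_1) = \int \psi(y)\,\varphi(q_1,y)\,\big[R(y)-R^*\big]\,d\rho(y).
\]
For $y<q^*$ we have $\psi>0$ and $R<R^*$, while for $y>q^*$ we have $\psi<0$ and $R>R^*$. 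So the integrand is $\le 0$ everywhere, and it is strictly negative off the point $y=q^*$. Consequently, if $\rho(\{q^*\})<1$, then $H(q_2) < R^* H(q_1)$ for every pair $q_1<q_2$. In words, a less forgiving agent weights the positive contributions more heavily.

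To finish, suppose first that $\rho(\{q^*\})<1$. If the support of $\rho$ contained two points $q_1<q_2$ with $H(q_1)=H(q_2)=0$, the strict inequality would give $0<0$, a contradiction. So the support is a single point $q_0$, and $\rho=\delta_{q_0}$. Then $H(q_0)=\psi(q_0)\varphi(q_0,q_0)=0$ forces $\psi(q_0)=0$, i.e. $q_0=q^*$, which contradicts $\rho(\{q^*\})<1$. Therefore $\rho(\{q^*\})=1$, so $\rho=\delta_{q^*}$. Conversely, this Dirac mass is stationary, since $\psi(q^*)=0$.

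I expect the main obstacle to be the strictness in the comparison step. One must check that $R$ is strictly monotone, which needs $A>B$ and is fine. One must also check that the integrand vanishes only on $\{y=q^*\}$, so that any mass of $\rho$ off $q^*$ yields a strict inequality. Beyond that, only the support bookkeeping needs care for general Borel $\rho$, where atoms and continuous parts are handled uniformly because the argument uses only two support points.
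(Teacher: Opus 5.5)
Your proof is correct, but it takes a different route from the paper's formal proof.

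\textbf{The paper's route.} The paper expands the kernel as $\varphi(q,y)=\sum_{n\ge 0}(n+1)\,r(q)^n r(y)^n$. It then integrates $H(q)\psi(q)$ against $\rho$ to get $0=\sum_n (n+1)m_n^2$ with $m_n=\int r^n\psi\,d\rho$. Weierstrass density then shows that the signed measure $\psi\,d\rho$ is zero, so $\rho$ sits on the zero set $\{q^*\}$ of $\psi$. This uses only the positive-definite, sum-of-squares structure of the kernel. It does not touch the sign pattern of $\psi$ until the last line, so it would work for any continuous numerator.

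\textbf{Your route.} You use the single sign change of $\psi$ at $q^*$ together with the strictly monotone ratio $\varphi(q_2,\cdot)/\varphi(q_1,\cdot)$, a monotone-likelihood-ratio, Chebyshev-type comparison. This is essentially the informal intuition the paper sketches after its proof (less forgiving agents weight reciprocal opponents more heavily), made rigorous.

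\textbf{What your route buys.} It is more elementary: no power series and no density theorem. It also gives a stronger by-product. Whenever $\rho\neq\delta_{q^*}$, you have $H(q_2)<R^*H(q_1)$ for all $q_1<q_2$ on the whole line, not just on the support. So once $H(q_1)\le 0$, $H$ is strictly negative for every larger $q$. Hence $H$ changes sign at most once, from positive to negative. This single-crossing property has direct dynamical meaning (agents below the crossing raise $q$, those above lower it), and the sum-of-squares argument does not provide it.

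\textbf{Details checked.} Your strictness check is sound: $A>B\ge 0$, $Aa\le(1-2\eps)^2<1$, and the integrand is continuous, nonpositive, and strictly negative off $q^*$. Your support bookkeeping is standard: the support of a Borel probability measure on $[\eps,1-\eps]$ is nonempty and of full measure, so a singleton support forces a Dirac mass.
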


\begin{proof}
Write $r(y) = p-y \in [0, 1-2\eps]$ for the opponent's degree of reciprocity, so that $1-(p-q)(p-y) = 1 - r(q)r(y)$ with $0 \le r(q)r(y) \le (1-2\eps)^2 < 1$.
On this range the kernel admits the uniformly convergent expansion
\[
    \varphi(q,y) \;=\; \frac{1}{\big(1-r(q)r(y)\big)^{2}} \;=\; \sum_{n=0}^{\infty}(n+1)\,r(q)^{n}\,r(y)^{n}.
\]
Because $H$ vanishes at every point of the support of $\rho$, the integral of $H(q)\psi(q)$ against $\rho$ is zero.
Substituting the expansion and exchanging the sum with the two integrals, which uniform convergence permits,
\[
    0 \;=\; \iint \psi(q)\,\varphi(q,y)\,\psi(y)\; d\rho(q)\, d\rho(y)
    \;=\; \sum_{n=0}^{\infty}(n+1)\,m_{n}^{2},
    \qquad m_{n} = \int r(y)^{n}\psi(y)\, d\rho(y).
\]
Every term of the sum is nonnegative, so $m_{n}=0$ for all $n \ge 0$. Since $r$ is a polynomial of degree one in $y$, the functions $r^{n}$ span the polynomials, so every polynomial moment of the finite signed measure $\psi \, d\rho$ vanishes.
Polynomials are uniformly dense in the continuous functions on the compact interval $[\eps, 1-\eps]$, so $\psi\, d\rho$ integrates every continuous function to zero and is therefore the zero measure. Hence $\psi(y) = 0$ for $\rho$-almost every $y$.

Finally, $1-p = \eps > 0$ and $1+p-y > 0$ for $y \in [\eps,1-\eps]$, so the only zero of $\psi$ is $b(p-y) = c$, that is $y = p - c/b = 1-c/b-\eps = q^{*}$, which lies in $[\eps,1-\eps]$ by the hypothesis $c/b < 1-2\eps$.
A probability measure concentrated on a single point is the Dirac mass there.
\end{proof}

There is a simple intuition for Theorem~2, based on the gradient alone.
The sign of each pairwise gradient in the $q$ direction is set by the opponent: $\psi(y)$ is positive when the opponent's degree of reciprocity, $r(y)$, exceeds $c/b$, and negative when it falls below $c/b$, the threshold described in Sec.~II\,B.
Two agents with different forgiveness weigh the same opponents differently, because the kernel $\varphi(q,y)$ increases with $r(y)$, and increases faster for a less forgiving focal agent.
A less forgiving agent therefore weights the positive contributions of highly reciprocal opponents more heavily than a more forgiving agent does.
If the more forgiving agent's gradient vanishes, the less forgiving agent's gradient is positive, unless every opponent contributes zero, that is, unless every agent is at $q^{*}$.
Hence no distribution supported at two or more values of $q$ satisfies the hypothesis of the theorem, and a distribution concentrated at a single value of $q$ satisfies it only if that value is a zero of $\psi$, that is, only at GTFT.
The proof above reaches the same conclusion by a different argument: it weights each agent's gradient by the agent's own numerator $\psi(q)$ and averages over the population, which turns the vanishing of every gradient into the
vanishing of a sum of squares.

Note that ``stationary'' is used loosely in Theorem~2: a distribution on $p = 1-\eps$ could be stationary under our projection system if all gradients are pointed perpendicular to the $p = 1-\eps$ line, i.e. purely in the $p$-direction.
Thus, we consider any distribution where the gradient in the $q$-direction is zero. (In other words, this is without the necessary restriction that the gradient in the $p$-direction be non-negative.)

The proof imposes no restriction on the support, so it covers a single agent, a finite population of $N$ agents, and a continuous density alike; the finite case is the one realized by the agent-based model.
Two remarks follow from \eqref{eq:qgrad} at no extra cost.
First, the hypothesis $c/b < 1-2\eps$ is sharp: if $c/b > 1-2\eps$ then $\psi < 0$ throughout $[\eps,1-\eps]$, every agent on the reciprocal line has a strictly negative gradient in the $q$ direction, and no stationary distribution exists there at all.
Second, the case of a population entirely at ALLC requires no separate argument, since $\psi(1-\eps) = -c\eps \neq 0$ and ALLC is therefore not a zero of $\psi$.

The strategy $(1-\eps, q^{*})$ satisfies $p - q = c/b$ exactly, and substituting $r_2 = c/b$ into the payoff gives $\pi(\mathbf{x}_1, (1-\eps,q^{*})) = b(1-\eps)-c$ for every strategy $\mathbf{x}_1$: an opponent at GTFT fixes the payoff of every co-player, whatever that co-player does.
Both components of the gradient against a population concentrated at GTFT therefore vanish identically, which is why the Dirac mass at GTFT is a fixed point of the full two-dimensional dynamics and not only of the motion along the line.

\section{Long-term experiment (Fig.~\ref{fig:long_run_trajectory})}

Directly evaluating the population gradient requires $N^2$ pairwise calculations at each learning step. Moreover, as the population approaches GTFT, the payoff gradients become small, so appreciable strategy changes require increasingly long timescales. To investigate the fate of the slow transient described in Sec.~III\,D of the main text, we investigated how a smaller population sampled from the full $N = 2500$ population at the end of its simulation would evolve. Importantly, this experiment was intended to estimate the qualitative long-term behavior, rather than reproduce the trajectory of the original population exactly.

We initialized the reduced population from the $N = 2500$ agent simulation at step 25000. At this point, all but one agent had $p = 1-\eps$. We retained this agent with $(p, q) \approx (0.678762, 0.999),$ as it is crucial to the population dynamics, and sampled 124 from those along the $p = 1-\eps$ boundary. To do this, we sorted the 2499 remaining agents by forgiveness $q$ and divided them into 124 contiguous groups: 105 with 20 agents, 19 with 21 agents. We selected one agent uniformly at random from each group, retaining its strategy and learning rate. This stratified sampling process ensures the $q$ distribution roughly resembles that of the full population.

Each sampled agent was assigned a fixed weight equal to the fraction of the original population that its stratum contained, i.e. a sampled agent from a group of size $m$ received weight $m/2500$. Importantly, this meant the agent with $p < 1-\eps$ received weight $1/2500.$ Without this weighting procedure, this agent would have twentyfold higher impact on the gradient of all agents, potentially impacting the population dynamics. Instead, this way, we can use a reduced population but still estimate the feedback dynamics between this agent and the 2499 others more precisely. The gradient for each agent is thus computed as
\[
G_i^t = \sum_{j=1}^{125} w_j \nabla_{x_i^t} \pi(x_i^t, x_j^t), \qquad \sum_{j=1}^{125}w_j = 1.
\]
To accelerate movement in the slow transient, we multiplied sampled learning rates by 100, but preserved their relative variation. We then performed $10^{10}$ additional updates:
\[
    x_i^{t+1} = \mathrm{Proj}_{[\eps, 1-\eps]^2} (x_i^t + 100\eta_i G_i^t).
\]

The calculation was implemented in compiled C using double-precision arithmetic. Reducing the population from 2500 to 125 agents alone decreases the pairwise computation by a factor of 400. We obtained a further reduction by expanding the interaction kernel in weighted population moments. Writing $r_j = p_j - q_j$, the pairwise gradients contain the factor $(1-r_ir_j)^{-2}$. For interaction partners with degrees of reciprocity within an interval of midpoint $a$ and half-width $h$, write $r_j = a+ hs_j$, where $|s_j|\leq 1$. Then
\[
    \frac{1}{(1-r_ir_j)^2} = \frac{1}{(1-ar_i)^2} \sum_{k=0}^{\infty} (k+1) \left(\frac{hr_i}{1-ar_i}\right)^k s_j^k.
\]
This expansion allows weighted moments of the opponent population to be computed once and reused for every agent. With $K$ retained terms, the calculation requires order $NK$ operations rather than $N^2$. At each update, the agent with the smallest $r_j$ was evaluated directly, and excluded from the expansion interval, narrowing the range occupied by the remaining population.

The number of terms was chosen adaptively so that the absolute truncation error in each gradient component was bounded by $10^{-14}$ before floating-point rounding. When too many terms were required, the calculation reverted to direct pairwise summation. Every learning step was still evaluated. The truncation tolerance controls the approximation of an individual gradient, not the accumulated error in the trajectory.

The reduced population moves progressively closer to GTFT, with the initially offset agent first increasing its reciprocity $p$ and then its forgiveness $q$ (Fig.~\ref{fig:long_run_trajectory}). After $10^{10}$ updates, the agent is at $(p, q) \approx (0.998994,0.502090)$, while the other agents have $p = 0.999,$ and $q$ values close to $0.499 = 1-c/b-\eps.$ The magnitudes of the payoff gradients also decrease significantly. These observations are consistent with a slow approach to GTFT, although the population has not reached that point exactly.

\section{No recovery to GTFT under uniform learning rates}
\label{sec:no-recovery-uniform}

We now consider a population with uniform learning rates in which every agent is displaced from GTFT to the same strategy.
If the displacement lowers the degree of reciprocity below $c/b$, the population cannot recover to GTFT, provided the learning rate satisfies the bound below.
This conclusion follows from the full discrete update rule, including projection, and does not require the displacement to be infinitesimal.

\begin{theorem}[No recovery under uniform learning rates]
\label{thm:no-recovery-uniform}
Let $0<\eps<1/2$, $b>c>0$, and write $\alpha=c/b$, with $\alpha<1-2\eps$.
Suppose all $N$ agents initially occupy the same strategy $x^0=(p^0,q^0)\in[\eps,1-\eps]^2$, with
\[
    r^0=p^0-q^0<\alpha,
\]
and share a learning rate $\eta>0$ satisfying
\[
    \eta b\leq(1-\alpha)^2.
\]
Then all agents remain at a common strategy $x^t=(p^t,q^t)$, with $r^t<\alpha$ at every learning step. Both $p^t$ and $q^t$ are nonincreasing, and $p^t<1-\eps$ for every $t\geq 1$.
In particular, the population cannot return to GTFT.
\end{theorem}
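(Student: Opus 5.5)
The plan is a direct induction on $t$ using the closed form of the self-play gradient. Because every agent starts at the same $x^0$ and shares the learning rate $\eta$, every agent computes the same population gradient. That gradient is just the pairwise gradient of $x^t$ against itself. So all agents apply the same update and remain at a common strategy $x^{t+1}$. It therefore suffices to track one point $(p^t,q^t)$ with $r^t=p^t-q^t$, and to prove the inductive step: if $r^t<\alpha$, then $p^{t+1}\le p^t$, $q^{t+1}\le q^t$, $p^{t+1}<1-\eps$, and $r^{t+1}<\alpha$.

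\emph{Step 1: the self-play gradient.} Substitute $\mathbf{x}_1=\mathbf{x}_2=(p,q)$ into the two gradient formulas of Sec.~II\,B. Using $q_1r_2+q_2=q(1+r)$, the identity $1-q-pr=(1-p)(1+r)$ (which follows from $q=p-r$), and $1-r^2=(1-r)(1+r)$, one gets
\[
\partial_p=\frac{(br-c)\,q}{(1-r)^2(1+r)},\qquad
\partial_q=\frac{(br-c)(1-p)}{(1-r)^2(1+r)} .
\]
On the square we have $q\ge\eps>0$, $1-p\ge\eps>0$ and $|r|<1$. Hence for $r<\alpha$ both components are strictly negative. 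This gives monotonicity, since clipping never reverses the sign of a move. It also gives $p^{t+1}<p^t\le 1-\eps$, because the $p$-step is strictly negative and clipping at the upper end is never active.

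\emph{Step 2: $r$ stays below $\alpha$ (the main obstacle).} A strictly negative $q$-step raises $r$, so I need a quantitative bound. Before projection,
\[
\Delta r=\eta\,b(\alpha-r)\,\frac{(1-p)-q}{(1-r)^2(1+r)} .
\]
Only the case $1-p-q>0$ matters, since otherwise $r$ does not increase. I will bound the ratio $R=(1-p-q)/((1-r)^2(1+r))$ by splitting on the sign of $r$.
\begin{itemize}
\item If $0\le r<\alpha$, then $1-p-q\le 1-r$ because $p+q\ge p-q$. Hence $R\le 1/(1-r^2)<1/(1-\alpha)^2$.
\item If $r<0$, then $1-p-q\le 1+r$ because $1+r-(1-p-q)=2p\ge 0$. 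Hence $R\le 1/(1-r)^2<1\le 1/(1-\alpha)^2$.
\end{itemize}
In both cases the hypothesis $\eta b\le(1-\alpha)^2$ gives $\Delta r<\alpha-r$, that is, $r+\Delta r<\alpha$.

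\emph{Step 3: effect of the projection.} Clipping $q$ at $\eps$ only shrinks the drop in $q$, so it can only lower $r^{t+1}$ relative to the unprojected value. Clipping $p$ at $\eps$ raises $p^{t+1}$ to $\eps$. In that case $r^{t+1}=\eps-q^{t+1}\le 0<\alpha$ directly. Either way $r^{t+1}<\alpha$, which closes the induction.

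Finally, GTFT has $p=1-\eps$ and $r=\alpha$. Both are excluded for every $t\ge1$, so the population never returns to GTFT.
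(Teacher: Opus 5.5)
Your proof is correct and follows the paper's argument step for step: the same self-play gradient with both components proportional to $r-\alpha$, the same expression $\Delta r=\eta b(\alpha-r)(1-p-q)/\bigl((1-r)^2(1+r)\bigr)$, and the same treatment of lower clipping in $p$ and $q$. The paper does not split on the sign of $r$, because your second-case bound $1-p-q<1+r$ (equivalent to $2p>0$) holds for every $r$ and, combined with $(1-r)^2>(1-\alpha)^2$, already gives $R<1/(1-\alpha)^2$.

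One small slip: $p^{t+1}<p^t$ fails when $p^t=\eps$, because of clipping, but your conclusion $p^{t+1}<1-\eps$ still holds, since then $p^{t+1}=\eps<1-\eps$.
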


\begin{proof}
Agents with identical strategies and learning rates receive identical updates.
Consequently, the population remains concentrated at a single strategy $x=(p,q)$, whose update is
\[
    x^{t+1}=\operatorname{Proj}_{[\eps,1-\eps]^2} \left(x^t+\eta G(x^t)\right),
    \qquad
    G(x) = \left.\nabla_{x_1}\pi(x_1,x_2)\right|_{x_1=x_2=x}.
\]
Substituting $r=p-q$ into the pairwise gradient gives
\[
    G_p(x)=\frac{bq(r-\alpha)}{(1-r)^2(1+r)},
    \qquad
    G_q(x)=\frac{b(1-p)(r-\alpha)}{(1-r)^2(1+r)}.
\]
Since $q\geq\eps$, $1-p\geq\eps$, and $|r|\leq 1-2\eps$, both components are strictly negative whenever $r<\alpha$.
Thus, as long as this inequality holds, neither coordinate increases under the projected update.

It remains to show that $r<\alpha$ is preserved. Let
\[
    \widetilde p=p+\eta G_p(x),
    \qquad
    \widetilde q=q+\eta G_q(x)
\]
denote the coordinates before projection, and set $\widetilde r=\widetilde p-\widetilde q$.
Subtracting the two gradient components yields
\[
    \widetilde r-\alpha
    =
    (r-\alpha)
    \left[
        1-\frac{\eta b(1-p-q)}{(1-r)^2(1+r)}
    \right].
\]
If $1-p-q\leq 0$, the bracket is at least one.
Otherwise, $0<1-p-q<1+r$, and $r<\alpha$ implies
\[
    0
    <
    \frac{\eta b(1-p-q)}{(1-r)^2(1+r)}
    <
    \frac{\eta b}{(1-r)^2}
    \leq
    \frac{\eta b}{(1-\alpha)^2}
    \leq 1.
\]
The bracket is therefore strictly positive in either case, so
$\widetilde r<\alpha$.

Projection also preserves the inequality.
Neither coordinate can cross the upper boundary because both gradients are negative.
If $\widetilde p\leq\eps$, projection gives $p^{t+1}=\eps$ and hence $r^{t+1}\leq 0<\alpha$.
Otherwise, projecting $\widetilde q$ upward to $\eps$, if necessary, can only decrease the degree of reciprocity.
Thus $r^{t+1}<\alpha$, and induction establishes the claim at every step.

Finally, $G_p<0$ whenever $r<\alpha$, so reciprocity decreases strictly whenever $p>\eps$.
If the population starts on $p=1-\eps$, it leaves that boundary at the first update.
Since reciprocity never subsequently increases, the population cannot return to the reciprocal boundary or to GTFT.
\end{proof}

There is a simple intuition for Theorem~\ref{thm:no-recovery-uniform}, based on the gradient alone. Both components of each pairwise gradient contain the factor $b r_2 - c$: raising either cooperation probability increases an agent's payoff only when its co-player's degree of reciprocity exceeds $c/b$ (Sec.~II\,B of the main text). In a population concentrated at a single strategy, every co-player has the agent's own degree of reciprocity, $r$, so when $r < c/b$ both components of the gradient are negative, and every agent lowers both its reciprocity and its forgiveness. A return to GTFT would require reciprocity to increase, which the gradient permits only when $r$ exceeds $c/b$. The degree of reciprocity can increase, because forgiveness may fall faster than reciprocity, but each learning step changes $r$ by an amount proportional to its distance from $c/b$, and the bound $\eta b \le (1-\alpha)^2$ ensures that no step changes $r$ by more than that distance. Hence $r$ remains below $c/b$, and the population never returns to the reciprocal boundary. With variable learning rates, the premise of the theorem fails at the first learning step: agents with different learning rates move different distances, so the population no longer occupies a single strategy, and the slow learners retain high reciprocity (Fig.~\ref{fig:gtft-smallq-perturbation}).

For our parameters $b=1$, $c=1/2$, and $\eta=0.2$, the learning-rate condition is satisfied because $0.2\leq(1-1/2)^2=0.25$.
In particular, displacing every agent from GTFT to
\[
    (p^0,q^0)
    =
    \left(1-\eps,\;1-c/b-\eps+\delta\right),
    \qquad
    0<\delta\leq c/b,
\]
gives $r^0=c/b-\delta<c/b$, so Theorem~\ref{thm:no-recovery-uniform} excludes recovery following any such common increase in forgiveness.
The result applies to a population concentrated at one displaced strategy (as in Fig.~\ref{fig:gtft-smallq-perturbation}); it does not establish the same conclusion when different agents receive different displacements (as in Fig.~\ref{fig:gtft-perturbation-population}, though these displacements also do not always lower the degree of reciprocity).

\newpage

\section*{Supplemental Figures}

\setcounter{figure}{0}
\renewcommand{\thefigure}{S\arabic{figure}}
\renewcommand{\theHfigure}{S\arabic{figure}}

\begin{figure}[h]
\centering
\includegraphics[width=\textwidth]{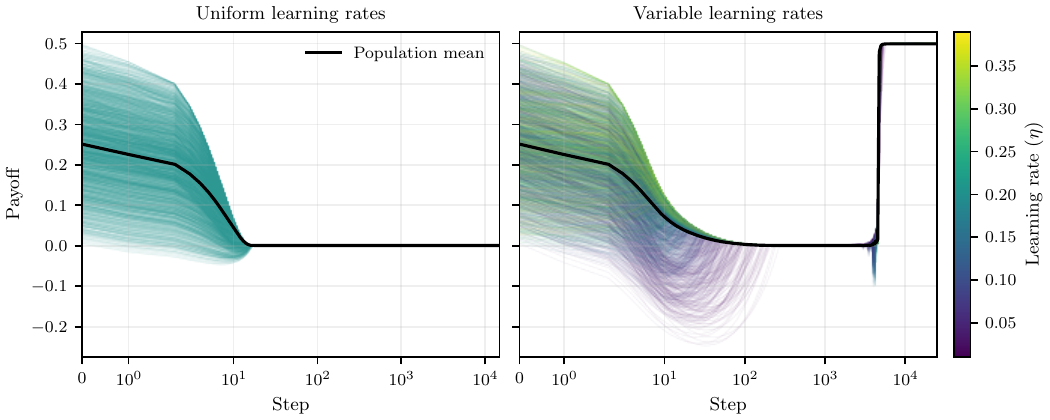}
\caption{\textbf{Populations with variable learning rates can recover socially optimal strategies.} Payoffs for agents with uniform learning rates (left) and variable learning rates (right) over time. Values are evaluated from the same simulation as Figs.~2 and~3 of the main text.}
\label{fig:avg-payoff}
\end{figure}

\begin{figure}[h]
\centering
\includegraphics[width=\textwidth]{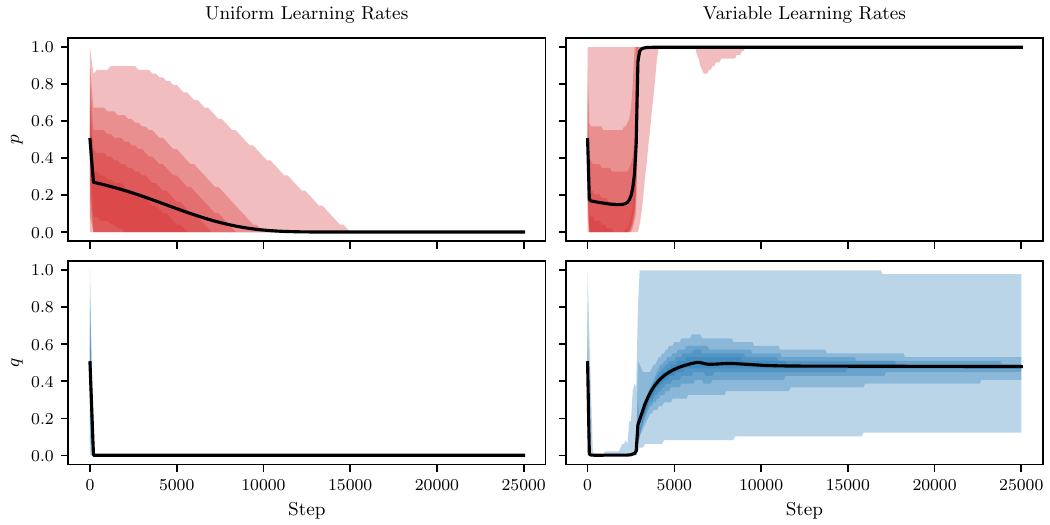}
\caption{\textbf{The infinite-population model reproduces the same divergence of outcomes as the agent-based model.}
Distributions of $p$ (top) and $q$ (bottom) over time in the infinite-population model (Sec.~\ref{sec:infinite-population}), with uniform learning rates (left) and variable learning rates (right).
The population is represented by 100 grids, one per learning rate, each of $50 \times 50$ points evenly spaced on $[\varepsilon, 1-\varepsilon]^2$; every grid carries total frequency $1/100$.
With variable learning rates, the grids' rates are evenly spaced from $\eta = 0.01$ to $\eta = 0.39$; with uniform learning rates, all are $\eta = 0.2$.
Initially, every grid point has equal frequency.
Nested shaded bands contain the central $99.9\%$, $80\%$, $60\%$, $40\%$ and $20\%$ of the frequency-weighted $p$ or $q$ values, darker toward the center; the black line is the mean.}
\label{fig:grid-distributions}
\end{figure}

\begin{figure}[h]
\centering
\includegraphics[width=\textwidth]{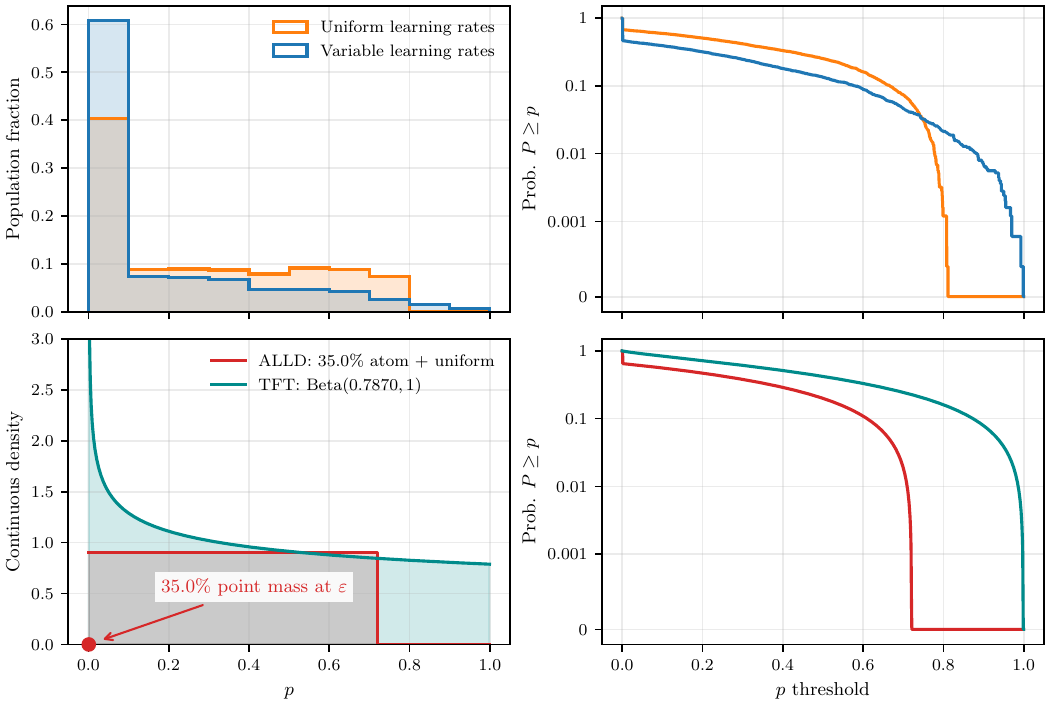}
\caption{\textbf{The distribution of $p$ values upon reaching $q = \eps$ has support on higher $p$ when agents have variable learning rates.} Top row: histograms (left) and survival functions (right) of the $p$ values of agents on the first timestep where every agent has $q = \eps$. Uniform learning rates produce a distribution with a higher mean, but lower maximum. Bottom row: densities (left) and survival functions (right) of distributions that satisfy the conditions of Theorem~1 of the main text. ALLD distribution: atom with frequency 0.35 at ALLD, rest of population uniformly distributed from $\eps$ to approximately $0.72$. TFT distribution: $\mathrm{Beta}(0.7870, 1)$ scaled onto $[\eps, 1-\eps]$.}
\label{fig:tus}
\end{figure}

\begin{figure}[h]
\centering
\includegraphics[width=\textwidth]{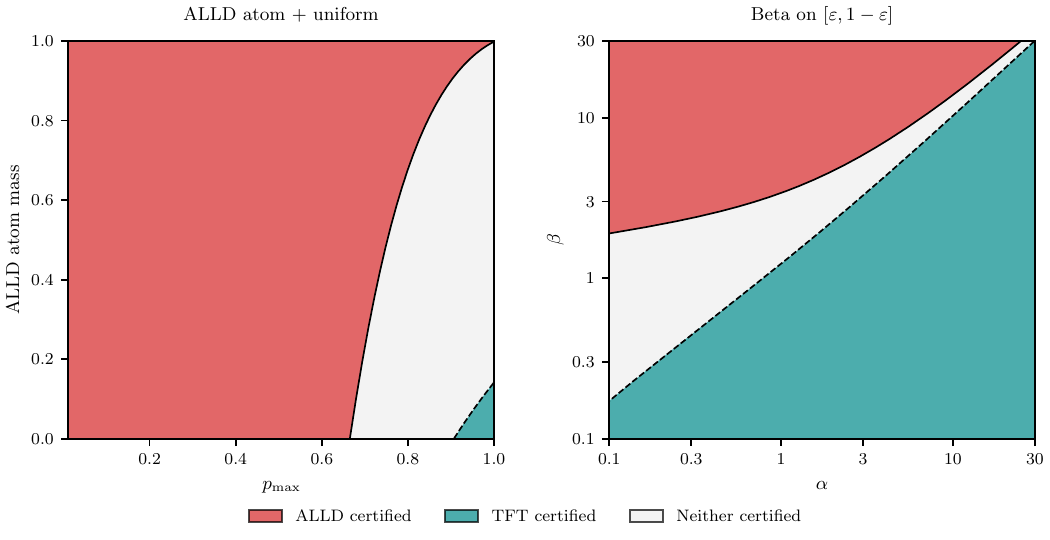}
\caption{\textbf{Theorem~1 can determine whether a wide range of distributions of $p$ on $q = \eps$ will converge to ALLD or TFT.} Left: distributions with a point mass at ALLD ($p = \eps$) and the rest of the population uniformly distributed from $\eps$ to $p_{\max}$. Right: Beta distributions, scaled onto the region $[\eps, 1-\eps]$. Red denotes a parameter regime where the starting distribution converges to ALLD in finite steps; teal a parameter regime where the distribution converges to TFT in finite steps; gray where the outcome cannot be determined by Theorem~1 alone. The dashed line corresponds to $P(\mu) - Q(\mu) = 0$, and the solid black line to $U(\mu) = 0$.}
\label{fig:tus-phase-diagram}
\end{figure}

\begin{figure*}[h]
\centering
\includegraphics[width=\textwidth]{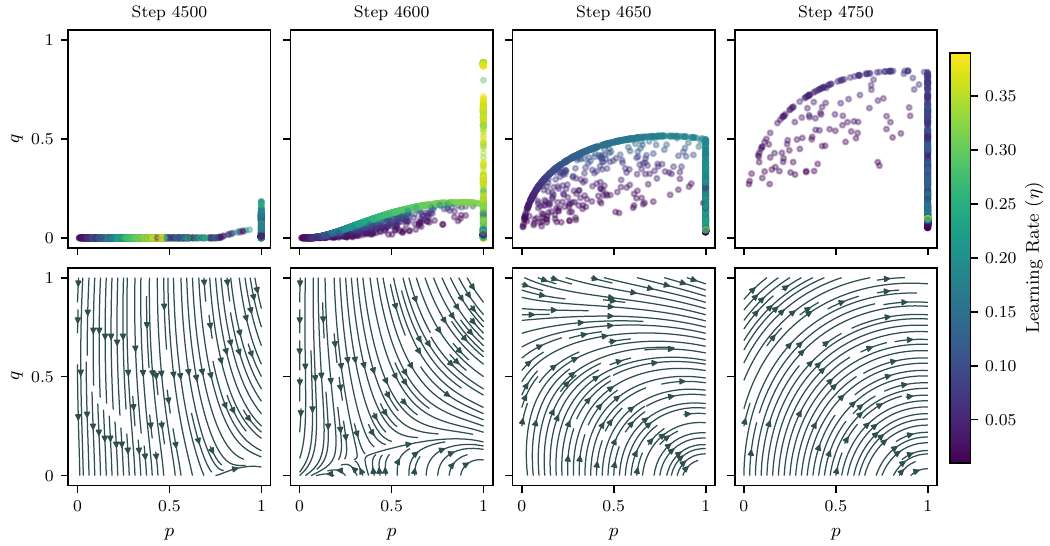}
\caption{\textbf{Agents leave the unforgiving line in order of learning rate, and the slower they learn the higher the forgiveness they reach.} Agents' strategies (top, colored by learning rate $\eta$) and streamline plots of the population gradient field (bottom) at four learning steps. \emph{Step 4500}: agents occupy the unforgiving line $q = \eps$, with a buildup of high-reciprocity agents near TFT. \emph{Step 4600}: agents at TFT move up the $p = 1-\eps$ boundary while faster learners arc through the interior. \emph{Steps 4650} and \emph{4750}: slower learners follow wider arcs toward higher $q$ on the $p = 1-\eps$ line, producing the initial GTFT-like distribution.
}
\label{fig:move-to-GTFT}
\end{figure*}

\begin{figure*}[h]
\centering
\includegraphics[width=\textwidth]{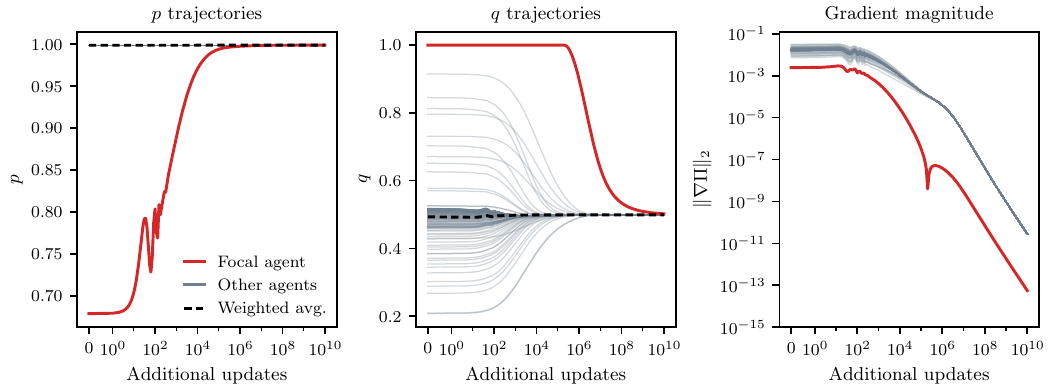}
\caption{\textbf{Estimate of the variable learning rate population's long-term trajectory predicts slow convergence to GTFT.} Starting from the $N=2500$ agent population at step 25000, we constructed a weighted sample of 125 agents. The sole agent with $p < 1-\eps$ was retained, and one agent was sampled uniformly from 124 nearly equal-sized strata of the remaining population, sorted by $q$. Each representative of a stratum was given weight according to the proportion of agents within that stratum relative to the whole population. The agent with $p < 1-\eps$ retained weight 1/2500 (without this weighting procedure, this agent's weight would be 1/125). Learning rates were multiplied by 100 for computational efficiency; this is not meant to simulate exactly what would happen in a timescale 100 times longer. The sampled population then underwent $10^{10}$ updates. Left and center: individual $p$ and $q$ trajectories, with the initially aberrant agent colored in red, and population mean values in dashed black lines. Right: the magnitude of the gradient (before scaling by learning rates) for each agent.}
\label{fig:long_run_trajectory}
\end{figure*}

\clearpage

\begin{figure*}[h]
\centering
\includegraphics[width=\textwidth]{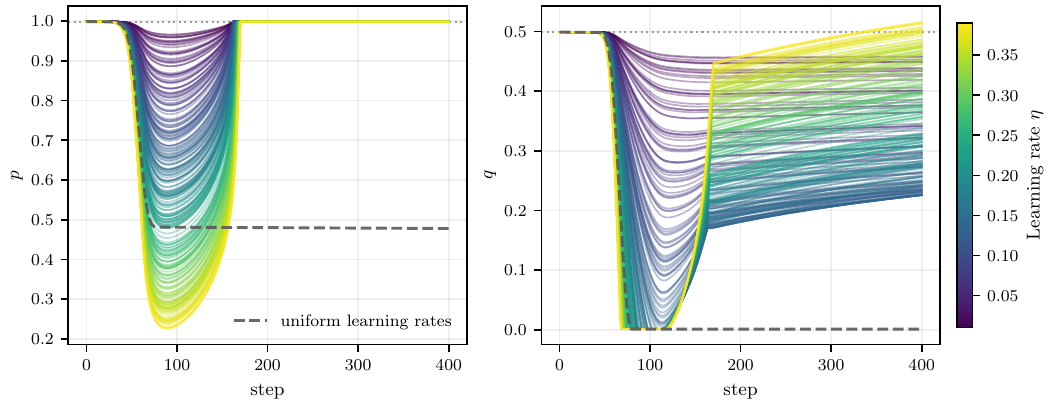}
\caption{\textbf{Recovery trajectory from a small $q$ perturbation for a GTFT population with variable learning rates.} All $N = 250$ agents received a perturbation $q\mapsto q+10^{-6}$ from GTFT. This perturbation was tested with uniform learning rates $\eta_i = 0.2$ (dashed line) and variable learning rates $\eta_i \sim \mathrm{Uniform}(0.01, 0.39)$ (colored lines). Note that the uniform learning population, when initialized from the same point, will thereafter share a location on the strategy space.}
\label{fig:gtft-smallq-perturbation}
\end{figure*}

\begin{figure*}[h]
\centering
\includegraphics[width=\textwidth]{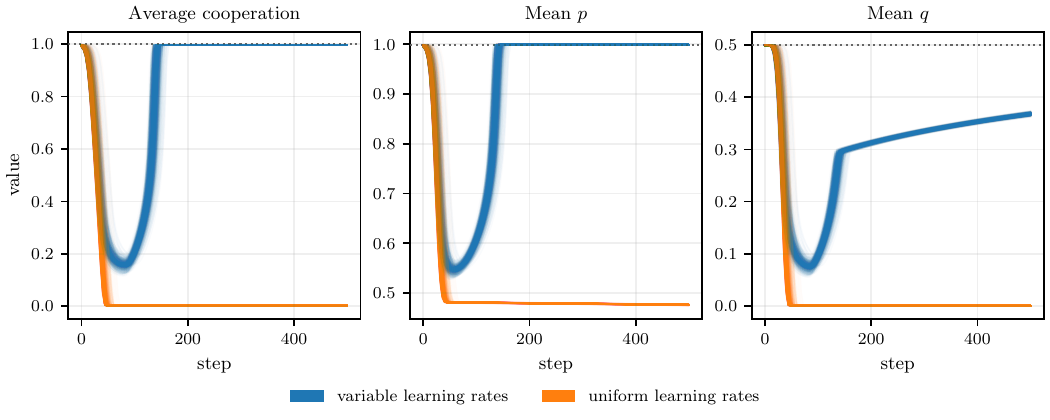}
\caption{\textbf{Populations with variable learning rates can recover from shocks to a GTFT distribution, but those with uniform learning rates can't.} A fraction $\mathrm{round}(xN)$, of $N = 250$ agents, where $x \sim \mathrm{Uniform(0.01, 1.0)}$, are perturbed area-uniformly onto the half-disk of radius 0.01 around GTFT in the strategy space, with each displacement drawn independently. Each perturbation regime was tested with uniform learning rates $\eta_i = 0.2$ or variable learning rates $\eta_i \sim \mathrm{Uniform}(0.01, 0.39).$ Results were collected over $M=500$ tested populations. The average cooperation (left), mean $p$ (middle), and mean $q$ (right) are shown for every one of $M$ populations.}
\label{fig:gtft-perturbation-population}
\end{figure*}

\begin{figure*}[h]
\centering
\includegraphics[width=\textwidth]{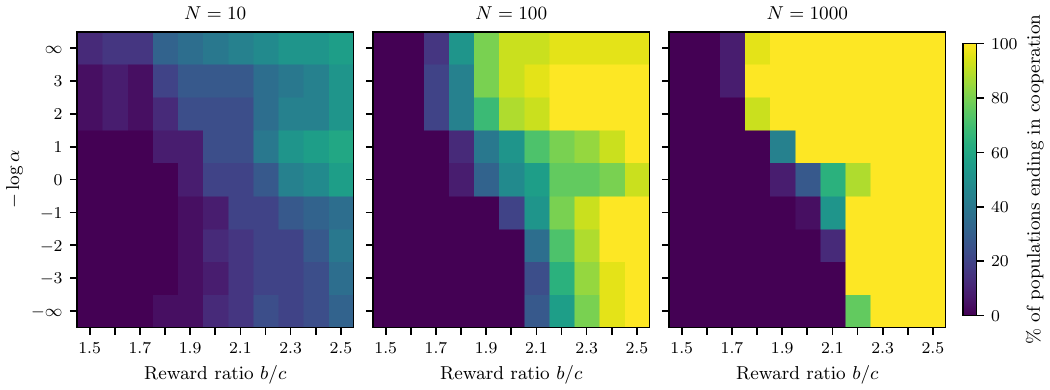}
\caption{\textbf{Cooperation is favored by a larger benefit-to-cost ratio, greater heterogeneity in learning rates, and larger populations.}
Each panel shows the percentage of 25 replicate populations ending in cooperation (color) as a function of the benefit-to-cost ratio $b/c$ (horizontal axis) and the heterogeneity of learning rates, $-\log_{10}\alpha$ (vertical axis), for populations of size $N = 10$ (left), $N = 100$ (center) and $N = 1000$ (right).
Learning rates are drawn from a symmetric $\mathrm{Beta}(\alpha, \alpha)$ distribution scaled onto $[0.01, 0.39]$. Its variance increases with $-\log_{10}\alpha$: at $-\log_{10}\alpha = -\infty$, every rate is $0.2$; at $0$, the rates are uniform on $[0.01, 0.39]$; at $+\infty$, half the agents have rate $0.01$ and half $0.39$.
Initial strategies are drawn uniformly from $[\varepsilon, 1-\varepsilon]^2$, and the same 25 initial populations (and learning-rate quantiles) are used in every cell of a panel.
Every 100 steps, a population is classified as cooperative if its mean $p$ exceeds $0.99$ or its average cooperation exceeds $0.95$, and as defecting if every agent has $p < 0.5$ or the fraction of agents at ALLD exceeds $(b/c - 1)/(b/c - 1/2)$; populations still unclassified after 50{,}000 steps count as cooperative if their average cooperation exceeds $0.5$.
The transition between defection and cooperation sharpens with $N$.}
\label{fig:popsize+b/c comparison}
\end{figure*}

\begin{figure*}[h]
\centering
\includegraphics[width=\textwidth]{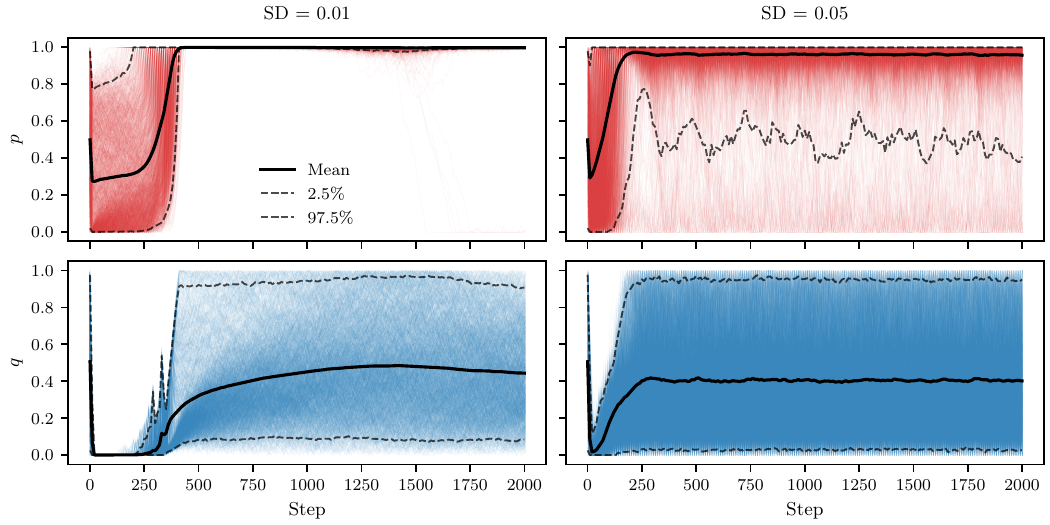}
\caption{\textbf{Noisy gradient ascent with uniform learning rates also produces a GTFT-like population.} Individual agents' $p$ (top, red) and $q$ (bottom, blue) values over learning steps, with low multivariate Gaussian noise (left; SD = 0.01) and high noise (right; SD = 0.05) added to each strategy update. The black line shows the mean and dashed lines contain 95\% of agents within them. In both cases reciprocity $p$ rises to near $1$ and forgiveness $q$ settles near the GTFT value, with a wider stationary spread under high noise.}
\label{fig:noisy-distributions}
\end{figure*}

\begin{figure*}[h]
\centering
\includegraphics[width=\textwidth]{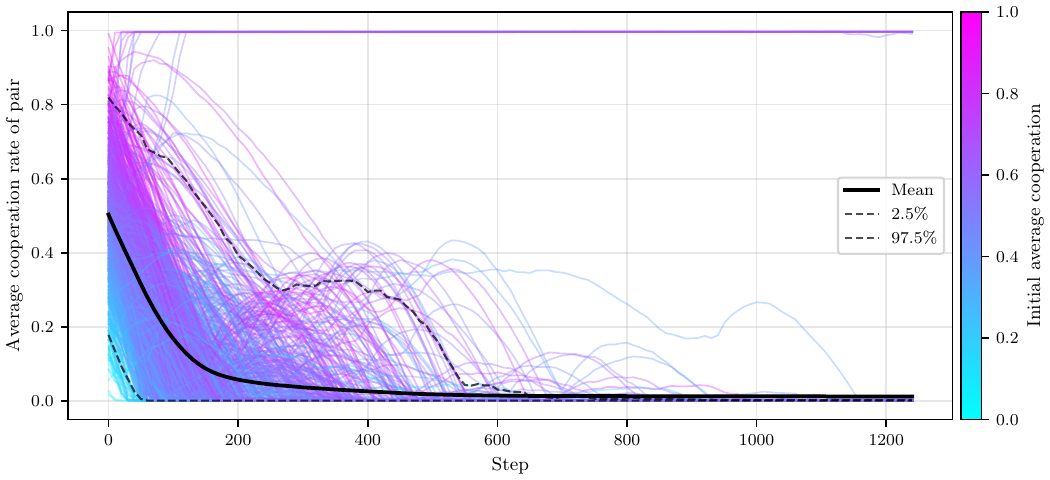}
\caption{\textbf{An isolated pair of noisy gradient learners fails to cooperate.}
Average cooperation rate of 1000 independent pairs of agents over learning steps.
In each pair, the two agents play only each other (no self-play), starting from strategies drawn uniformly from $[\varepsilon, 1-\varepsilon]^2$, and update by gradient ascent with learning rate $\eta = 0.02$ plus isotropic Gaussian noise (SD $= 0.001$) added to each update.
Each line is one pair, colored by its initial average cooperation rate; the black line shows the mean across pairs and dashed lines the 2.5\% and 97.5\% quantiles.
By step 1240, $98.9\%$ of pairs have fallen below $5\%$ cooperation; the remaining $1.1\%$, with initial cooperation between $0.43$ and $0.87$, remain near full cooperation.}
\label{fig:noisy-cooperation}
\end{figure*}

\end{document}